  \def\ps@pprintTitle{%
 \let\@oddhead\@empty
 \let\@evenhead\@empty
 \def\@oddfoot{\centerline{\thepage}}%
 \let\@evenfoot\@oddfoot}
\newtheorem{theorem}{Theorem}
\newtheorem{lemma}[theorem]{Lemma}
\newtheorem{proposition}[theorem]{Proposition}
\newtheorem{definition}[theorem]{Definition}
\newenvironment{proof}[1][Proof]{\begin{trivlist}
\item[\hskip \labelsep {\bfseries #1}]}{\end{trivlist}}
\journal{TBA}
\begin{document}
\begin{frontmatter}

\title{Insider Trading with Temporary Price Impact}

\author[author1]{Weston Barger}\ead{wdbarger@uw.edu}
\author[author2]{Ryan Donnelly}\ead{ryan.f.donnelly@kcl.ac.uk}

\address[author1] {University of Washington, Seattle, WA, United States}
\address[author2] {King's College London, London, UK}

\date{}

\begin{abstract}

	We model an informed agent with information about the future value of an asset trying to maximize profits when subjected to a transaction cost as well as a market maker tasked with setting fair transaction prices. In a single auction model, equilibrium is characterized by the unique root of a particular polynomial. Analysis of this polynomial with small levels of risk-aversion and transaction costs reveal a dimensionless parameter which captures several orders of asymptotic accuracy of the equilibrium behaviour. In a continuous time analogue of the single auction model, incorporation of a transaction costs allows the informed agent's optimal trading strategy to be obtained in feedback form. Linear equilibrium is characterized by the unique solution to a system of two ordinary differential equations, of which one is forward in time and one is backward. When transaction costs are in effect, the price set by the market maker in equilibrium is not fully revealing of the informed agent's private signal, leaving an information gap at the end of the trading interval. When considering vanishing transaction costs, the equilibrium trading strategy and pricing rules converge to their frictionless counterparts.

\end{abstract}
\begin{keyword}
	market microstructure, asymmetric information, price impact, transaction cost
\end{keyword}
\end{frontmatter}

\section{Introduction}

When traders place orders on a securities exchange, they face transaction frictions. Direct frictions include exchange and brokerage fees, but consumers of liquidity also experience indirect costs. Liquidity providers adjust their limit orders to reflect the information contained in incoming market orders by moving their price quotes in the direction of order flow. This adversely affects traders who take liquidity as their subsequent orders will be transacted at a less favourable price. Additionally, if an aggressive order is large enough then it consumes all of the liquidity at the best available price and the remainder of the order is executed at sequentially worse prices. This can be thought of as a transaction cost which is dependent on the size of the aggressive order and the state of the order book. %In this work, we consider an insider trading model in the spirit of \cite{kyle1985continuous} for which a trader with inside information is risk-averse and pays a transaction cost proportional to the size of her orders.

Risk-aversion and transaction costs have been previously studied in the insider trading literature. The authors \cite{holden_subrahmanyam_1994} extend the discrete-time model of \cite{kyle1985continuous} to include an exponentially risk-averse insider. Furthermore, their model allows for multiple insiders with the same level of risk-aversion who all receive identical information. \cite{subrahmanyam1998transaction} further extends the model of \cite{holden_subrahmanyam_1994} to include a quadratic transaction cost for a risk-averse insider in discrete-time. \cite{baruch2002insider} extends the continuous-time model which was first given by \cite{kyle1985continuous} and generalized by \cite{back1992insider} by including risk-aversion.

In this work, we model an exponentially risk-averse (or risk-neutral) insider who faces a transaction cost per share that is proportional to the size of the order. We first present a single-auction model and classify the unique linear equilibrium. We show that the market maker's equilibrium pricing rule corresponds to the unique positive root of a particular polynomial determined by the model parameters. An asymptotic expansion of the roots of the aforementioned polynomial is performed for small transaction cost and small risk-aversion, which allows us to examine the effects of the transaction cost relative to frictionless models.

We then formulate an analogous model in continuous-time and present a linear equilibrium classified by the solution to a forward-backward ordinary differential equation (FBODE). We show that the resulting FBODE has a unique solution and is explicitly solvable when the insider is risk-neutral. We then analyze the effects of varying the model parameters on equilibrium using numerical solutions of the associated FBODE. Although we cannot solve for equilibrium explicitly unless the insider is risk-neutral, we are able to make conclusions about the nature of equilibrium in certain limiting cases of the transaction cost parameter. In particular, when the transaction costs vanish the equilibrium trading and pricing rules converge to their frictionless counterparts. This result could be used to simplify the analysis of other similar asymmetric information models because it provides a family of feedback controls which converge to the equilibrium control in the frictionless case. 

The models formulated in this paper are also related to those of optimal execution literature. Often in that literature, the pressure on the asset price exerted by order flow is referred to as permanent price impact, while the immediate cost associated with market microstructure is referred to as temporary price impact. In their seminal work, the authors of \cite{almgren2001optimal} model permanent and temporary price impact by defining two distinct price processes: the midprice and the transaction price. The midprice is the midpoint between the best quoted bid and ask prices set by liquidity providers, and the transaction price is the average price per unit of asset at which the trader collects proceeds from trades. The authors of \cite{almgren2001optimal} model permanent impact by letting the drift of the midprice be an exogenous function of the trader's order flow. Temporary impact is modelled by defining the transaction price of trades to be equal to the midprice plus an exogenous function of the trader's order volume. 

Our model also includes midprice and transaction price processes. As distinct from \cite{almgren2001optimal}, we directly model price setting market makers which allows for the permanent price impact to be endogenous. However, we define a transaction price process that is analogous the model of \cite{almgren2001optimal} by explicitly introducing an exogenous transaction cost. To be consistent with the insider trading literature, we refer the permanent price impact effect simply as price impact and temporary price impact effect as transaction cost.

The continuous-time version of our model is a direct generalization of the continuous-time models of the aforementioned papers \cite{kyle1985continuous}, \cite{back1992insider} (when the insider's signal is Gaussian), and \cite{baruch2002insider} (with constant volatility of noise trading). As such, some qualitative features of equilibrium in these works also arise in the present paper, but there are also some notable differences. In particular, a key feature of many other models with asymmetric information is that the asset price is always fully revealing of the insider's signal at the end of the trading horizon (the insider always has incentive to exploit her informational advantage). This is not the case in our model when the transaction cost is non-zero. As a consequence, revelation of the insider's signal does contain information not already incorporated in the publicly available price.

The rest of the paper is organized as follows. In Section \ref{sec:singleAuction}, we develop a single-auction model, present the unique linear equilibrium, and analyze the effects of the transaction cost by performing an asymptotic expansion for small transaction cost. We shift our focus to continuous-time in Section \ref{sec:continuousTime} and begin by presenting a continuous-time model in Section \ref{sec:continuousTimeModel}. We begin Section \ref{sec:continuousTimeEquilibrium} by developing the mathematical machinery necessary the subsequent presentation of the linear equilibrium, and we finish the section by presenting a linear continuous-time equilibrium. In Section \ref{sec:parameterDependence}, we demonstrate the effects on the equilibrium of the previous section of varying the model parameters. In Section \ref{sec:limitingEquilibria}, we pay special attention to the transaction cost parameter by analyzing the limit of the equilibrium as this parameter tends to zero and infinity. Some concluding remarks are offered in Section \ref{sec:conclusion}.

\section{Single-Auction}
\label{sec:singleAuction}

In this section, we consider a single-auction market where the transaction price of the insider's trades incurs an additional cost per share which is linear with respect to the trade volume. After presenting the model which describes the dynamics of trade and the objective of the insider and market maker, we prove the existence of a unique equilibrium in this setting. We then investigate the effects of the transaction cost on the associated equilibrium.

The single-auction model in this section is similar to work contained in \cite{subrahmanyam1998transaction} for the case of a single agent. Though structurally similar, this previous work never considers a model in which risk-averse insiders interact with unpredictable noise traders. In that paper either the insider is risk-neutral or the volume traded by the noise traders is directly observed by the insider before submitting their own trade (in that case the stochasticity in the model comes from a random endowment to the insiders). Some of our results are analogous to \cite{subrahmanyam1998transaction}, but there are some distinctions, for example that our model guarantees equilibrium whereas the lack of noise traders can give rise to situations with no equilibrium (see Lemma 2 of  \cite{subrahmanyam1998transaction}). We include the single-auction results so that we may perform a more in-depth analysis of the equilibrium through an asymptotic expansion, and for the sake of completeness before investigating a continuous-time version of the model.

\subsection{Model}
\label{sec:singleAuctionModel}

In the spirit of \cite{kyle1985continuous}, we consider a single-auction on a market with one risky asset that is traded on an exchange with three types of traders: market makers who set the asset's midprice, an insider who has information about the future value of the asset, and noise traders. We let $v$ denote the ex-post liquidation value of the asset, and we assume that $v \sim \mathcal N (v_0, \Sigma_0^v)$. 

The insider receives the realization of $v$ before the auction takes place, but this information is unavailable to the public. Thus, she wishes to utilize her informational advantage by submitting an order of size $\Delta x$ in an auction. We assume that the number of noise traders on the exchange is large, and we denote the aggregate order of the noise traders by $\Delta z$, which we assume to be distributed as $\Delta z \sim \mathcal N (0, \sigma^2)$. The aggregate order of the insider and noise traders are submitted to the market maker who observes on the total quantity $\Delta y$, where
\begin{align}
	\Delta y &= \Delta x + \Delta z\,.\label{eq:delta.y}
\end{align}
The auction takes place in two phases. First, the insider and the noise traders submit orders to the exchange, and second, the market maker observes the aggregate order $\Delta y$ and sets the midprice $p$. 

We assume that the insider pays a transaction cost proportional to the size of her order so that the effective transaction price is
\begin{align}
\widehat p &= p + c \, \Delta x\,,
\end{align}
where $c > 0$ is a constant referred to as the transaction cost parameter. Note that as the insider sells shares of the asset her transaction price is lower than the midprice, and, conversely, as the insider purchases shares her transaction price is higher than the midprice.

The difference between the midprice, $p$, and the effective transaction price, $\widehat{p}$, could arise from one of many sources. The interpretation given in \cite{subrahmanyam1998transaction} is that of a transaction tax, possibly invoked upon the market by a regulator. A different interpretation could be that the effective transaction price is due to different preferences between a large number of market makers. In previous works, it is assumed that there is a very large number of perfectly competitive risk-neutral market makers which drives all of them to quote the same price. In reality, liquidity providers may set different prices than each other, giving rise to a demand structure depending on their aggregate quotes. We do not explicitly model this behaviour or interaction between market makers and instead capture this effect through the linear dependence of transaction price on trade volume.

Without a loss of generality, we assume that the insider holds no shares of the asset before the auction. The wealth of the insider after the trades are executed is thus
\begin{align}
	w &= (v - \widehat p)\,\Delta x\,.
\end{align}
The insider would like to maximize the utility of her expected wealth $w$. That is, the insider chooses $\Delta x$ to achieve
\begin{align}
	\max_{\Delta x} \mathbb{E}[U(w)\,|\,v]\,,\label{eq:singleAuctionUtilityMax}
\end{align}
where $U$ is the insider's utility function. The market maker is tasked with setting prices efficiently. That is, the market maker chooses the price $p$ such that 
\begin{align}
	p &= \mathbb{E}[v\,|\,\Delta y]\,.\label{eq:singleAuctionEffCond}
\end{align}

\subsection{Single-Auction Equilibrium}
\label{sec:singleAuctionEquil}

We begin this section by defining what it means for a pricing rule and a trading strategy to form an equilibrium. We then classify the unique linear equilibrium for an exponentially risk-averse (or risk-neutral) insider.

We assume that the market maker and insider choose pricing rules and trading strategies, respectively, as functions of information available to them during the auction. That is, the market maker chooses the price $p$ as a function of $\Delta y$, and the insider chooses her order size $\Delta x$ as a function of $v$. Let $P$ and $X$ be functions such that $p = P(\Delta y)$ and $\Delta x = X(v)$.

\begin{definition}\label{def:singleAuctionEquil}
	A \textbf{single-auction equilibrium} $(P, X)$ consists of a pricing rule $P$ and trading strategy $X$ such that 
	\begin{itemize}
		\item given the pricing rule $P$, the order $\Delta x$ given by the trading strategy $X(v)$ achieves the maximum in \eqref{eq:singleAuctionUtilityMax}, and 
		\item given the trading strategy $X$, the price $p$ given by the pricing rule $P(\Delta y)$ satisfies the efficiency condition \eqref{eq:singleAuctionEffCond}.
	\end{itemize}
	The pair $(P,X)$ is a \textbf{linear single-auction equilibrium} if both $P$ and $X$ are linear functions of their arguments.
\end{definition}

Suppose that $\mathscr P$ is the set of pricing rules $P \equiv P(\Delta y)$ and let $\mathscr P_0 \subset \mathscr P$ be the set of pricing rules for which there exists a corresponding trading strategy $X$ that satisfies \eqref{eq:singleAuctionUtilityMax}. Similarly, let $\mathscr X$ be the set of trading strategies $X \equiv X(v)$ and let $\mathscr X_0 \subset \mathscr X$ be the set of trading strategies for which there exists a pricing rule $P$ that satisfies \eqref{eq:singleAuctionEffCond}. The sets $\mathscr P_0$ and $\mathscr X_0$ induce mappings $\rho: \mathscr P_0 \to \mathscr X$ and $\xi: \mathscr X_0 \to \mathscr P$. A pricing rule $P$ and trading strategy $X$ form a single-auction equilibrium if $\rho(P) \in \mathscr X_0$, $\xi(X) \in \mathscr P_0$ and $(P,X) =  (\xi(X), \rho(P))$.

We will restrict our focus to the exponentially risk-averse (or risk-neutral) insider. For any constant $A \geq 0$, we define the utility function 
\begin{align}
	U(w) &= \left\{ \begin{array}{cr}
					w, & A = 0 \\
					- e^{- A w}, & A > 0
					\end{array}
	\right. \,, \label{eq:exponentialUtilityFunction}
\end{align}
and we refer to $A$ as the risk aversion parameter.

Before presenting the classification of linear equilibrium it is helpful to introduce the constants 
\begin{align}
	\lambda_K &= \frac 12 \sqrt{\frac{\Sigma_0^v}{\sigma^2}}, & \beta_K &= \frac{1}{2 \,\lambda_K}\,.\label{eq:kyleSingleAuctionLambda}
\end{align}

The constants $\lambda_K$ and $\beta_K$ correspond to the single-auction pricing rule and trading strategy, respectively, of \cite{kyle1985continuous}. In that work, the market maker's single-auction equilibrium pricing rule for a risk-neutral insider is $P_K( \Delta y ) = v_0 + \lambda_K \,\Delta y$, and the corresponding optimal trading strategy is $X_K(v) = \beta_K(v-v_0)$. We will write the equilibrium pricing rule and trading strategy of our model in terms of $\lambda_K$ which helps to illuminate the effect of the added transaction cost.

\begin{theorem}[Single Auction Equilibrium]	\label{thm:singleAuctionEquil}
	Choose $A \geq 0$ and let $U$, defined in \eqref{eq:exponentialUtilityFunction}, be the insider's utility function.	Then the unique linear single-auction equilibrium is given by
	\begin{align}
		P(\Delta y) &= v_0 + \lambda\,\Delta y\,,\\
		X(v) &= \beta\,(v-v_0)\,
	\end{align}
	where
	\begin{align}
		\beta &= \frac{	1 }{ 2\, (\lambda + c )+ A\, \sigma^2 \,\lambda^2 }\,, \label{eq:singleAuctionBeta}
	\end{align}
	and $\lambda$ is the unique positive root of the polynomial
	\begin{align}
		r(x) &=  A^2 \,\sigma^4 \,x^5 
					+ 4 \,A\, \sigma^2\, x^4
						+ 4 \,(1 + A\, c \,\sigma^2 )\,x^3
							+ 4\, (2\, c - A \,\sigma^2 \,\lambda_K^2)\, x^2
								+ 4\, (c^2 - \lambda_K^2 )\, x
									- 8 \,c\, \lambda_K^2\,. \label{eq:singleAuctionLambdaPoly}
	\end{align}
\end{theorem}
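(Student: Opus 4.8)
The plan is to reduce the two equilibrium conditions of Definition \ref{def:singleAuctionEquil} to a fixed-point system in the scalars $(\lambda,\beta)$ and then to show that this system is equivalent to locating a positive root of $r$. I would begin by positing linear forms $P(\Delta y)=a+\lambda\,\Delta y$ and $X(v)=\alpha+\beta\, v$ and solving the insider's problem \eqref{eq:singleAuctionUtilityMax} for a fixed linear pricing rule. Substituting $\widehat p$ and $p$ into the wealth gives $w=\big(v-a-(\lambda+c)\,\Delta x-\lambda\,\Delta z\big)\,\Delta x$, which, conditional on $v$, is Gaussian in the only remaining randomness $\Delta z$. For $A>0$ the exponential-utility objective collapses, via the Gaussian moment generating function, to the mean-variance certainty equivalent $\mathbb{E}[w\mid v]-\tfrac{A}{2}\,\mathrm{Var}(w\mid v)$, a quadratic in $\Delta x$ that is strictly concave precisely when $(\lambda+c)+\tfrac{A}{2}\sigma^2\lambda^2>0$; the risk-neutral case is the $A\downarrow 0$ limit. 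The first-order condition then gives $\Delta x=(v-a)\big/\big(2(\lambda+c)+A\sigma^2\lambda^2\big)$, so that $\beta$ is as in \eqref{eq:singleAuctionBeta} and $\alpha=-a\,\beta$.

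Next I would impose the efficiency condition \eqref{eq:singleAuctionEffCond}. Since $X$ is linear and $v,\Delta z$ are independent Gaussians, $(v,\Delta y)$ is jointly Gaussian with $\Delta y=\alpha+\beta v+\Delta z$, and the projection formula yields $\mathbb{E}[v\mid\Delta y]=v_0+\frac{\beta\,\Sigma_0^v}{\beta^2\,\Sigma_0^v+\sigma^2}\big(\Delta y-\alpha-\beta v_0\big)$. Matching coefficients against $a+\lambda\,\Delta y$ gives $\lambda=\frac{\beta\,\Sigma_0^v}{\beta^2\,\Sigma_0^v+\sigma^2}$ together with a relation for the intercept; combining the latter with $\alpha=-a\,\beta$ and the fact that $\lambda\beta=\frac{\beta^2\Sigma_0^v}{\beta^2\Sigma_0^v+\sigma^2}<1$ forces $a=v_0$, hence $X(v)=\beta(v-v_0)$. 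Every linear equilibrium is therefore encoded by these two scalar relations. To obtain the polynomial I would eliminate $\beta$: writing $\Sigma_0^v=4\lambda_K^2\sigma^2$ from \eqref{eq:kyleSingleAuctionLambda}, the market-maker relation reads $\lambda\big(4\beta^2\lambda_K^2+1\big)=4\beta\lambda_K^2$, and substituting $\beta=1\big/\big(2(\lambda+c)+A\sigma^2\lambda^2\big)$ and clearing denominators reproduces, after collecting powers of $\lambda$, exactly the quintic $r(\lambda)=0$ of \eqref{eq:singleAuctionLambdaPoly}.

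It remains to determine which roots are admissible and to prove there is exactly one. The insider's second-order condition is exactly $2(\lambda+c)+A\sigma^2\lambda^2>0$, i.e. $\beta>0$, which via $\lambda=\beta\Sigma_0^v/(\beta^2\Sigma_0^v+\sigma^2)$ forces $\lambda>0$; conversely, any positive root produces $\beta>0$ and a strictly concave insider objective, so the admissible linear equilibria correspond precisely to the positive roots of $r$. Existence follows at once from $r(0)=-8c\lambda_K^2<0$ and $r(\lambda)\to+\infty$ as $\lambda\to\infty$.

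The main obstacle is uniqueness, because the coefficients of $\lambda^2$ and $\lambda$ in $r$ have parameter-dependent signs. When $c\le\lambda_K$ the coefficient of $\lambda$ is nonpositive and the coefficient sequence of $r$ has a single sign change, so Descartes' rule of signs yields exactly one positive root. The delicate regime is $c>\lambda_K$, where Descartes permits up to three sign changes; there I would instead use the fixed-point map directly. The map $\beta(\lambda)=1\big/\big(2(\lambda+c)+A\sigma^2\lambda^2\big)$ is strictly decreasing with range $(0,\tfrac{1}{2c})$, while $h(\beta)=\beta\Sigma_0^v/(\beta^2\Sigma_0^v+\sigma^2)$ is strictly increasing on $(0,\tfrac{1}{2\lambda_K})$; since $c>\lambda_K$ gives $\tfrac{1}{2c}<\tfrac{1}{2\lambda_K}$, the composition $h\circ\beta$ is strictly decreasing in $\lambda$, so $\lambda=h(\beta(\lambda))$ has a unique solution. (Alternatively, one can verify that $r'(\lambda)>0$ at every positive root of $r$, which rules out multiple upward crossings directly.) Combining the two regimes establishes uniqueness of the positive root, and hence of the linear equilibrium.
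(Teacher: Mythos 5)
Your proposal is correct, and its equilibrium-characterization skeleton coincides with the paper's: conjecture linear rules, compute the insider's best response to a linear price (the paper does the cases $A=0$ and $A>0$ separately by completing the Gaussian expectation, which is exactly your certainty-equivalent reduction $\mathbb{E}[w\mid v]-\tfrac{A}{2}\mathrm{Var}(w\mid v)$, with the identical second-order condition $s(\lambda)=2(\lambda+c)+A\sigma^2\lambda^2>0$ from \eqref{eq:secondOrderCondition}), apply the normal projection theorem to get $\lambda = 4\lambda_K^2\beta/(1+4\lambda_K^2\beta^2)$ and $\mu=v_0$, and eliminate $\beta$ to land on the identity $\lambda\,(s^2(\lambda)+4\lambda_K^2)=4\lambda_K^2\,s(\lambda)$ of \eqref{eq:alternativeLambdaEquation}, equivalently $r(\lambda)=0$; your observation that a positive root automatically forces $s(\lambda)>0$ is the contrapositive of the paper's remark that $\lambda\leq 0$ would imply $s(\lambda)\leq 0$. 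Where you genuinely diverge is the uniqueness of the positive root. The paper argues from \eqref{eq:alternativeLambdaEquation} that both sides are strictly increasing polynomials with positive coefficients, the left side has strictly larger degree and vanishes at $\lambda=0$ while the right side is positive there, and concludes there is exactly one crossing; read literally, monotonicity plus a degree comparison does not by itself preclude multiple crossings of two increasing polynomials, so the paper's one-line conclusion leans implicitly on the special structure of the two sides. Your two-regime argument supplies a fully airtight replacement: for $c\leq\lambda_K$ the coefficient sequence of $r$ in \eqref{eq:singleAuctionLambdaPoly} has exactly one sign change whatever the sign of the $\lambda^2$-coefficient, so Descartes gives exactly one positive root, while for $c\geq\lambda_K$ the range $(0,\tfrac{1}{2c})$ of the decreasing map $\beta(\lambda)=1/s(\lambda)$ sits inside the interval $(0,\tfrac{1}{2\lambda_K})$ on which $h(\beta)=\beta\Sigma_0^v/(\beta^2\Sigma_0^v+\sigma^2)$ is increasing, making the fixed-point map strictly decreasing and the crossing with the identity unique. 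The paper's route buys brevity and a single uniform statement in the parameters; yours buys complete rigour at the cost of a case split, and also correctly flags a small point the paper glosses over, namely that pinning the intercept to $\mu=v_0$ requires $\lambda\beta\neq 1$, which you verify via $\lambda\beta<1$.
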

\begin{proof}
	For a proof see Section \ref{sec:pf_thm:singleAucionEquil} in the appendix.
\end{proof}

Intuitively, a risk-averse trader prefers to submit a smaller order compared to her risk-neutral counterpart because the inventory holdings are exposed to the randomness associated with the noise traders. For this reason one might expect that the insider submits smaller orders with as the value of $A$ is increased. Correspondingly, less information about the true value of the asset would be contained in the order signal $\Delta y$ received by the market maker, and she thus reduces the severity of her price adjustment. One might also expect that as the transaction cost parameter $c$ increases it becomes less worthwhile for the insider to submit large orders, regardless of her risk preference, thus leading to a smaller $\beta$ and $\lambda$. Both of these statements are indeed true as summarized by the following proposition.

\begin{proposition}[Single Auction Parameter Dependence]\label{prop:single_auction_dependence} Let $P(\Delta y) = v_0 + \lambda \Delta y$ and $X(v) = \beta(v-v_0)$ be the pricing rule and trading strategy forming the unique single-auction equilibrium given by Theorem \ref{thm:singleAuctionEquil}. If $A>0$ then
	\begin{align*}
		\frac{\partial \beta}{\partial c}  &< 0\,, & \frac{\partial \beta}{\partial A} & < 0\,,\\
		\frac{\partial \lambda}{\partial c}  &< 0\,, & \frac{\partial \lambda}{\partial A} & < 0\,.
	\end{align*}
\end{proposition}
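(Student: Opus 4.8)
The plan is to treat the equilibrium pair $(\lambda,\beta)$ as the simultaneous solution of two scalar relations and differentiate them implicitly with respect to $c$ and $A$. The first relation is the insider's first-order condition \eqref{eq:singleAuctionBeta}, rewritten as
\begin{align*}
	G_1(\lambda,\beta;c,A) &:= \beta\,(2\lambda + 2c + A\,\sigma^2\lambda^2) - 1 = 0\,,
\end{align*}
and the second is the linear form of the market maker's efficiency condition \eqref{eq:singleAuctionEffCond}: applying the Gaussian projection formula to $\Delta y = \beta(v-v_0)+\Delta z$ with $p = v_0 + \lambda\,\Delta y$ gives $\lambda = \beta\,\Sigma_0^v/(\beta^2\Sigma_0^v+\sigma^2)$, which upon substituting $\Sigma_0^v = 4\lambda_K^2\sigma^2$ becomes
\begin{align*}
	G_2(\lambda,\beta) &:= \lambda\,(4\beta^2\lambda_K^2+1) - 4\beta\lambda_K^2 = 0\,.
\end{align*}
The polynomial \eqref{eq:singleAuctionLambdaPoly} is precisely what results from eliminating $\beta$ between $G_1$ and $G_2$, so in equilibrium both relations hold, and Theorem \ref{thm:singleAuctionEquil} guarantees that they define $\lambda(c,A)$ and $\beta(c,A)$ uniquely with $\lambda,\beta>0$.

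The next step is to form the Jacobian of $(G_1,G_2)$ with respect to $(\lambda,\beta)$,
\begin{align*}
	J &= \begin{pmatrix} \partial_\lambda G_1 & \partial_\beta G_1 \\ \partial_\lambda G_2 & \partial_\beta G_2 \end{pmatrix}
	= \begin{pmatrix} 2\beta(1+A\sigma^2\lambda) & 2\lambda + 2c + A\sigma^2\lambda^2 \\ 4\beta^2\lambda_K^2+1 & 4\lambda_K^2(2\lambda\beta-1) \end{pmatrix}\,,
\end{align*}
and to read off the signs of its entries. For $A>0$ and $c>0$ the top-left, top-right, and bottom-left entries are manifestly positive. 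The crucial point is the sign of the bottom-right entry: from \eqref{eq:singleAuctionBeta} one has
\begin{align*}
	2\lambda\beta &= \frac{2\lambda}{2\lambda + 2c + A\sigma^2\lambda^2} < 1\,,
\end{align*}
since the denominator strictly exceeds $2\lambda$ whenever $c>0$. Hence $\partial_\beta G_2 = 4\lambda_K^2(2\lambda\beta-1)<0$, and therefore $\det J = \partial_\lambda G_1\,\partial_\beta G_2 - \partial_\beta G_1\,\partial_\lambda G_2$ is a difference of a negative and a positive quantity, so $\det J<0$. In particular $\det J\neq 0$, and the implicit function theorem applies.

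Differentiating $G_1=G_2=0$ and noting that only $G_1$ depends explicitly on the parameters, with $\partial_c G_1 = 2\beta>0$ and $\partial_A G_1 = \sigma^2\beta\lambda^2>0$, Cramer's rule gives closed forms whose numerators are sign-definite once $\partial_\beta G_2<0$ is known, for instance
\begin{align*}
	\frac{\partial\beta}{\partial c} &= \frac{2\beta\,(4\beta^2\lambda_K^2+1)}{\det J}\,, & \frac{\partial\lambda}{\partial c} &= \frac{-2\beta\,\partial_\beta G_2}{\det J}\,,
\end{align*}
and the analogous formulas for the $A$-derivatives obtained by replacing the factor $2\beta$ with $\sigma^2\beta\lambda^2$. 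Each numerator is strictly positive (using $\partial_\beta G_2<0$ for the $\lambda$-derivatives), while $\det J<0$, so all four partial derivatives are negative, as claimed.

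The only genuine obstacle is pinning down the sign of $\partial_\beta G_2$; everything else is mechanical bookkeeping in Cramer's rule. That sign is settled by the elementary bound $2\lambda\beta<1$, which reflects the fact that a strictly positive transaction cost keeps the equilibrium inside the informative branch $\beta<\beta_K$ of the efficiency curve (equivalently $4\beta^2\lambda_K^2<1$). As an independent check on the two $\lambda$-derivatives, one may instead differentiate \eqref{eq:singleAuctionLambdaPoly} directly: using $r(\lambda)=0$ to write $\lambda_K^2 = \lambda\,(2\lambda+2c+A\sigma^2\lambda^2)^2/(4A\sigma^2\lambda^2+4\lambda+8c)$, both $\partial r/\partial c$ and $\partial r/\partial A$ reduce at $x=\lambda$ to strictly positive multiples of $(2A\sigma^2\lambda^2+4c)$, and since $r$ crosses zero from below at its unique positive root we have $r'(\lambda)>0$, whence $\partial\lambda/\partial c = -(\partial r/\partial c)/r'(\lambda)<0$ and likewise for $A$.
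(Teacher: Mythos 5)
Your proof is correct, but it takes a different route from the paper's. The paper first eliminates $\lambda$ from the two equilibrium relations \eqref{eqs:betaLambdaProofA1} to obtain a single quintic $q(\beta)=0$ in $\beta$ alone (equation \eqref{eq:betaPoly}), observes that $q$ is strictly increasing in $\beta>0$ while $\partial q/\partial c>0$ and $\partial q/\partial A>0$, and so gets $\partial\beta/\partial c,\,\partial\beta/\partial A<0$ by one-dimensional implicit differentiation; it then chains through $\lambda = 4\lambda_K^2\beta/(1+4\lambda_K^2\beta^2)$, where the sign of $d\lambda/d\beta$ hinges on $4\lambda_K^2\beta^2<1$, i.e.\ $\beta<\beta_K$, which the paper establishes in two steps (monotonicity of $\beta$ plus the frictionless value $\beta(0,0)=\beta_K$). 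You instead keep both relations $G_1=G_2=0$ and run the two-dimensional implicit function theorem with Cramer's rule, and your key sign fact $\partial_\beta G_2<0$ comes from the one-line bound $2\lambda\beta<1$ read directly off the insider's first-order condition \eqref{eq:singleAuctionBeta} (this bound is equivalent, via $G_2=0$, to the paper's $\beta<\beta_K$, but your derivation is more direct and avoids any appeal to the already-proved monotonicity, so there is no whiff of circularity). What the paper's route buys is minimal machinery -- a scalar polynomial with sign-definite coefficients; what yours buys is symmetry and explicitness -- all four derivatives drop out of a single $2\times 2$ computation with closed-form numerators, and the structure makes clear that the whole proposition rests on the single inequality $2\lambda\beta<1$.

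One small caveat, though it does not affect your main argument: in the ``independent check'' you assert $r'(\lambda)>0$ because $r$ ``crosses zero from below at its unique positive root.'' Crossing from below only gives $r'(\lambda)\geq 0$; a root of odd multiplicity three or higher would still cross with $r'(\lambda)=0$, so strict positivity needs a separate argument (for instance, writing $r(\lambda)=\lambda\,(s^2(\lambda)+4\lambda_K^2)-4\lambda_K^2\,s(\lambda)$ as in \eqref{eq:rInTermsOfs} and using $s(\lambda)-2\lambda = 2c+A\sigma^2\lambda^2>0$ one can show $r'(\lambda)$ is a sum of positive terms at the root). Since the check is redundant given the Cramer argument, this gap is cosmetic, but as stated that sentence is not fully justified.
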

\begin{proof}
	For a proof see Section \ref{sec:pf_prop:single_auction_dependence} in the appendix.
\end{proof}

The classification of equilibrium given by Theorem \ref{thm:singleAuctionEquil} is in terms of a root of a fifth degree polynomial. In general the quantities $\lambda$ and $\beta$ involved in the equilibrium will not have closed form expressions in terms of model parameters. However, we are able to find approximations to these quantities which hold when certain model parameters are small. These approximations are given in the following proposition.

\begin{proposition}[Single Auction Approximation]\label{prop:single_auction_approx} Let $P(\Delta y) = v_0 + \lambda \Delta y$ and $X(v) = \beta(v-v_0)$ be the pricing rule and trading strategy forming the unique single-auction equilibrium given by Theorem \ref{thm:singleAuctionEquil}, and define the dimensionless parameter
	\begin{align}
		\nu := \frac{\lambda_K\,\sigma^2\,A}{2} + \frac{c}{\lambda_K}\,.\label{eqn:nu_def}
	\end{align}
	Then the quantities $\lambda$ and $\beta$ admit the following approximations:
	\begin{align}
		\lambda &= \lambda_K\,\biggl(1 - \frac{1}{2}\,\nu^2 + \nu^3\biggr) + o(\nu^4)\,, \label{eqn:expansion_lambda}\\
		\beta &= \frac{1}{2\,\lambda_K}\,\biggl(1 - \nu + \frac{3}{2}\,\nu^2\biggr) + o(\nu^3)\,.\label{eqn:expansion_beta}
	\end{align}
\end{proposition}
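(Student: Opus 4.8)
The plan is to reduce the quintic \eqref{eq:singleAuctionLambdaPoly} to a dimensionless equation whose relevant root is a small perturbation of $1$, and then extract the expansion by an elementary bootstrap. First I would introduce the dimensionless variable $\mu = \lambda/\lambda_K$ together with the two small quantities $\alpha = \tfrac12\lambda_K\sigma^2 A$ and $\gamma = c/\lambda_K$, so that $\nu = \alpha + \gamma$ by \eqref{eqn:nu_def}. Substituting $x = \lambda_K\mu$, $A\sigma^2 = 2\alpha/\lambda_K$ and $c = \gamma\lambda_K$ into \eqref{eq:singleAuctionLambdaPoly} and dividing by $4\lambda_K^3$ turns the statement ``$\lambda$ is the unique positive root of $r$'' into ``$\mu$ is the unique positive root of''
\[ F(\mu;\alpha,\gamma) = \alpha^2\mu^5 + 2\alpha\mu^4 + (1+2\alpha\gamma)\mu^3 + 2(\gamma-\alpha)\mu^2 + (\gamma^2-1)\mu - 2\gamma. \]
The pivotal computation, which I would carry out first, is the evaluation $F(1;\alpha,\gamma) = (\alpha+\gamma)^2 = \nu^2$; this single identity is what forces the leading correction to depend on $\alpha,\gamma$ only through $\nu$.

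Next I would justify that the root of interest is an analytic function of the parameters near the origin. Since $F(1;0,0)=0$ and $\partial_\mu F(1;0,0) = 2 \neq 0$, the analytic implicit function theorem produces a unique analytic branch $\mu(\alpha,\gamma)$ with $\mu(0,0)=1$. For small $\alpha,\gamma$ this branch is the unique positive root supplied by Theorem \ref{thm:singleAuctionEquil}: indeed $F(0;\alpha,\gamma) = -2\gamma < 0 < \nu^2 = F(1;\alpha,\gamma)$, so the positive root lies just below $1$, matching the branch. This guarantees a genuine convergent Taylor expansion in $(\alpha,\gamma)$, so the formal series manipulations below are legitimate.

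Then I would compute the expansion by setting $\mu = 1 + \delta$ and expanding
\[ 0 = F(1+\delta) = F(1) + F'(1)\,\delta + \tfrac12 F''(1)\,\delta^2 + \cdots, \]
with $F(1) = \nu^2$, $F'(1) = 2 + 4\nu + (\nu^2 + 4\alpha\nu)$ and $F''(1) = 6 + O(\nu)$ obtained by differentiating $F$ at $\mu=1$. Solving order by order in $\nu$ gives $\delta = -\tfrac12\nu^2$ at leading order (from the balance $\nu^2 + 2\delta = 0$) and $\delta = -\tfrac12\nu^2 + \nu^3$ at the next order (the $\nu^3$ balance $2\delta_3 + 4\nu\delta_2 = 0$ yields $\delta_3 = \nu^3$), which is precisely \eqref{eqn:expansion_lambda} after multiplying by $\lambda_K$. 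At each displayed order the separate appearances of $\alpha$ and $\gamma$ collapse into $\nu$; the first genuinely $\alpha$-dependent contribution enters only beyond the displayed terms.

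Finally, for $\beta$ I would avoid re-solving anything and instead rewrite \eqref{eq:singleAuctionBeta} in the dimensionless variables as $\beta = \bigl(2\lambda_K(\mu + \alpha\mu^2 + \gamma)\bigr)^{-1}$. Substituting $\mu = 1 - \tfrac12\nu^2 + O(\nu^3)$ and using $\alpha\mu^2 + \gamma = \nu + O(\nu^3)$ gives the denominator factor $\mu + \alpha\mu^2 + \gamma = 1 + \nu - \tfrac12\nu^2 + O(\nu^3)$, and expanding its reciprocal by the geometric series produces $1 - \nu + \tfrac32\nu^2 + O(\nu^3)$, i.e. \eqref{eqn:expansion_beta}. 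I expect the main obstacle to be purely organizational: keeping the bookkeeping of the two independent small parameters honest while verifying that, through the orders claimed, every coefficient reassembles into a function of $\nu$ alone. The identity $F(1) = \nu^2$ is the crux that makes this collapse happen, and once it is in hand the remaining steps are routine power-series algebra.
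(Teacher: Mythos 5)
Your proposal is correct, and it reaches the expansions by a genuinely different route than the paper. The paper's proof invokes analyticity of polynomial roots in their coefficients, posits double power series $\lambda = \sum \tilde{\lambda}_{i,j}A^i c^j$ and $\beta = \sum \tilde{\beta}_{i,j}A^i c^j$, substitutes these into the quintic \eqref{eq:singleAuctionLambdaPoly} and into a second quintic $q(\beta)$ from \eqref{eq:betaPoly}, grinds out systems of ten and six equations respectively (recorded in Appendix B), and only at the very end discovers by factoring that the truncated sums collapse into functions of $\nu$ alone. You instead nondimensionalize first ($\mu = \lambda/\lambda_K$, $\alpha = \tfrac12\lambda_K\sigma^2 A$, $\gamma = c/\lambda_K$), which is where your approach pays off: the identity $F(1;\alpha,\gamma) = (\alpha+\gamma)^2 = \nu^2$, which I have verified along with your values $F'(1) = 2 + 4\nu + \nu^2 + 4\alpha\nu$ and $F''(1) = 6 + O(\nu)$, explains \emph{structurally} why the displayed orders depend on $\nu$ alone (and the stray $4\alpha\nu$ in $F'(1)$ even explains why the collapse fails at the next order, matching the authors' remark after the proposition). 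Your treatment of $\beta$ is also cleaner: rather than solving the second polynomial system, you exploit the equilibrium relation \eqref{eq:singleAuctionBeta} in the form $\beta = \bigl(2\lambda_K(\mu + \alpha\mu^2 + \gamma)\bigr)^{-1}$ and expand the reciprocal, which I checked gives $1 - \nu + \tfrac32\nu^2 + O(\nu^3)$ exactly. Your identification of the implicit-function-theorem branch with the unique positive root of Theorem \ref{thm:singleAuctionEquil} via the sign change $F(0) = -2\gamma < 0 < \nu^2 = F(1)$ is sound for $c>0$ (the regime of the model), and since $0 \le \alpha,\gamma \le \nu$, every higher-degree monomial is controlled by the appropriate power of $\nu$, so your remainder bookkeeping matches the paper's. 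The trade-off: the paper's brute-force method requires no cleverness and produces all mixed coefficients $\tilde{\lambda}_{i,j}$, $\tilde{\beta}_{i,j}$ explicitly, whereas your method is shorter, replaces the citation of analyticity of roots in coefficients with the self-contained analytic implicit function theorem, and reveals \emph{why} the single dimensionless parameter suffices rather than verifying it after the fact.
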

\begin{proof}
	For a proof see Section \ref{sec:pf_prop:single_auction_approx} in the appendix.
\end{proof}

As should be expected, the approximations to $\lambda$ and $\beta$ both converge to their counterparts present in \cite{kyle1985continuous} as $A,c\rightarrow 0$, namely $\lambda_K$ and $1/(2\lambda_K)$. In addition, there are some other interesting observations to be made about this result.

The first and most significant observation is that both approximations depend on the dimensionless parameter $\nu$ as defined in \eqref{eqn:nu_def}, and not on the individual values of $A$ or $c$. A priori, there is no reason why this dimensionless parameter $\nu$ fully determines the approximations, and not the individual values of $A$ and $c$. The authors have verified that higher order approximations with respect to $A$ and $c$ fully depend on their individual values, and thus expanding with respect to the dimensionless parameter $\nu$ alone is not possible to higher order that what appears in the Proposition \ref{prop:single_auction_approx}.

The second observation stems again from the dimensionless parameter $\nu$, which means that both parameters have the same qualitative effect on equilibrium locally around $A = c = 0$. The magnitude of the effects, however, depend on other model parameters. For example, if $\lambda_K$ is large, then increasing $A$ will have more effect on equilibrium than increasing $c$\footnote{Recall that $\lambda_K$ depends on the other market parameters $\Sigma_0^v$ and $\sigma$, so this remark depends on changing $\lambda_K$ without changing $\sigma$}. This can be made sense of intuitively from the perspective of the insider. If Kyle's model is taken as a reference, then $\lambda_K\sigma^2$ represents a magnitude of price risk faced by the insider due to the random noise traders, and $\lambda_K$ on its own represents a cost of trading to the insider. If $\lambda_K$ is large, then the insider faces a large amount of risk and we would expect any increase in risk-aversion from zero to have a significant effect on equilibrium. Similarly, when $\lambda_K$ is large, increasing the value of $c$ from zero represents a relatively small change in the total trading cost to the insider and should have insignificant effect on equilibrium.

The third observation is that the approximation of $\lambda$ does not have any first order correction terms. This is an indication that locally around $A=c=0$, the equilibrium value of $\lambda$ is relatively robust with respect to changes in $A$ and $c$ because it is affected only at second order and higher. The expansion for $\beta$ however does have first order corrections with respect to both parameters, so the insider's trading strategy does not retain the same level of robustness to small changes of $A$ and $c$ from $0$.

To demonstrate the accuracy of these approximations, we plot both the exact equilibrium quantities and the approximations for various sets of parameters. For the sake of visual clarity we plot this comparison with respect to a single scale parameter denoted $\theta$ which acts as follows: we fix values of $A$ and $c$ and find the equilibrium quantities and the corresponding approximations after making the replacements
\begin{align*}
	A \mapsto \theta\,A\,,\quad c\mapsto\theta\,c\,.
\end{align*}
We then plot the results as a function of $\theta$ in order to observe convergence as $\theta\rightarrow 0$. The results are shown in Figure \ref{fig:approx} for three pairs of $A$ and $c$. Due to the scaling introduced by $\theta$, the actual values of $A$ and $c$ are of less significance than their ratio. The curves which are shown would not be altered by changing $A$ and $c$ such that their ratio is fixed after a reparameterization of $\theta$.

\begin{figure}
	\begin{center}
		{\includegraphics[trim=140 240 140 240, scale=0.55]{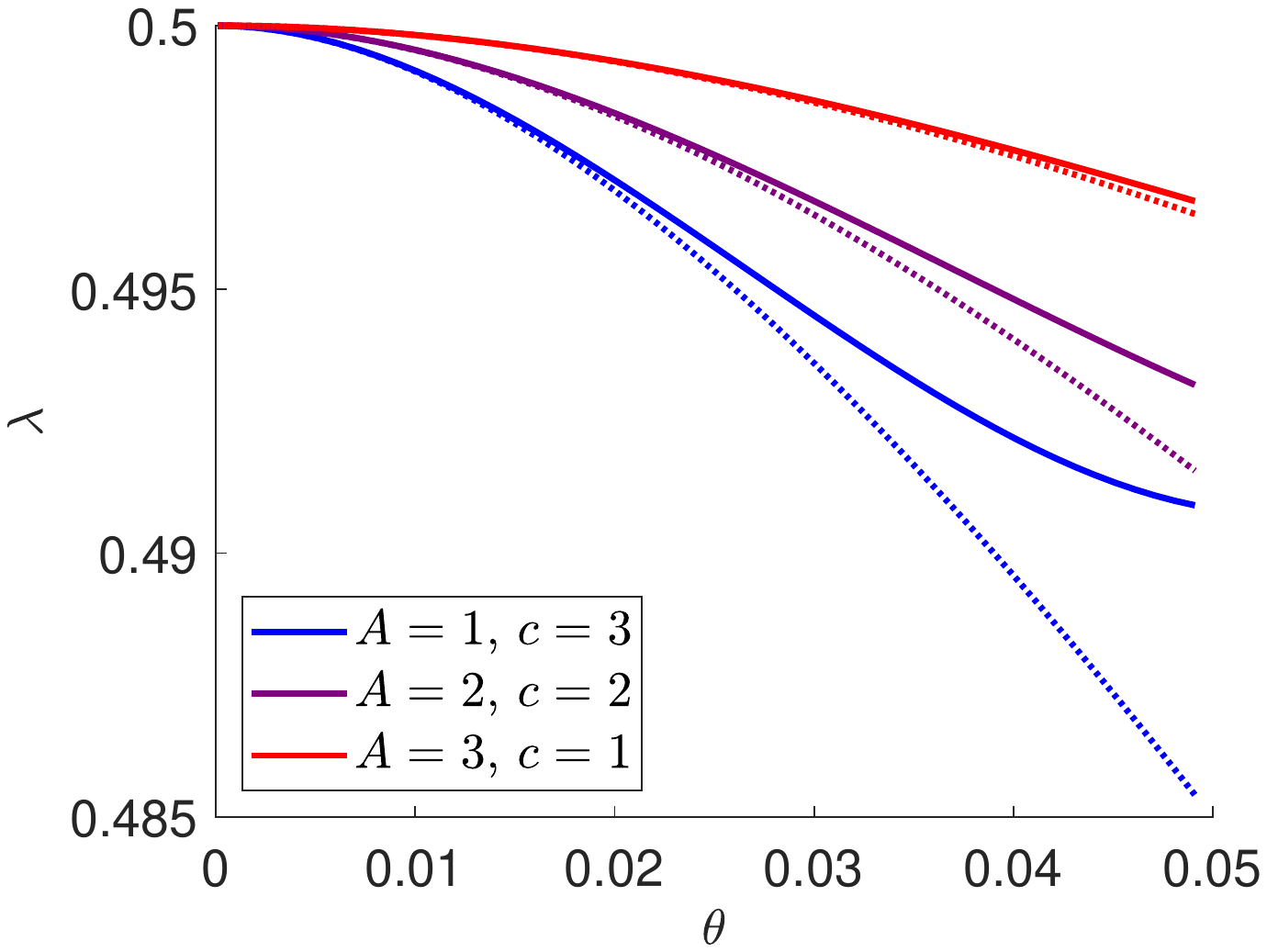}}\hspace{15mm}
		{\includegraphics[trim=140 240 140 240, scale=0.55]{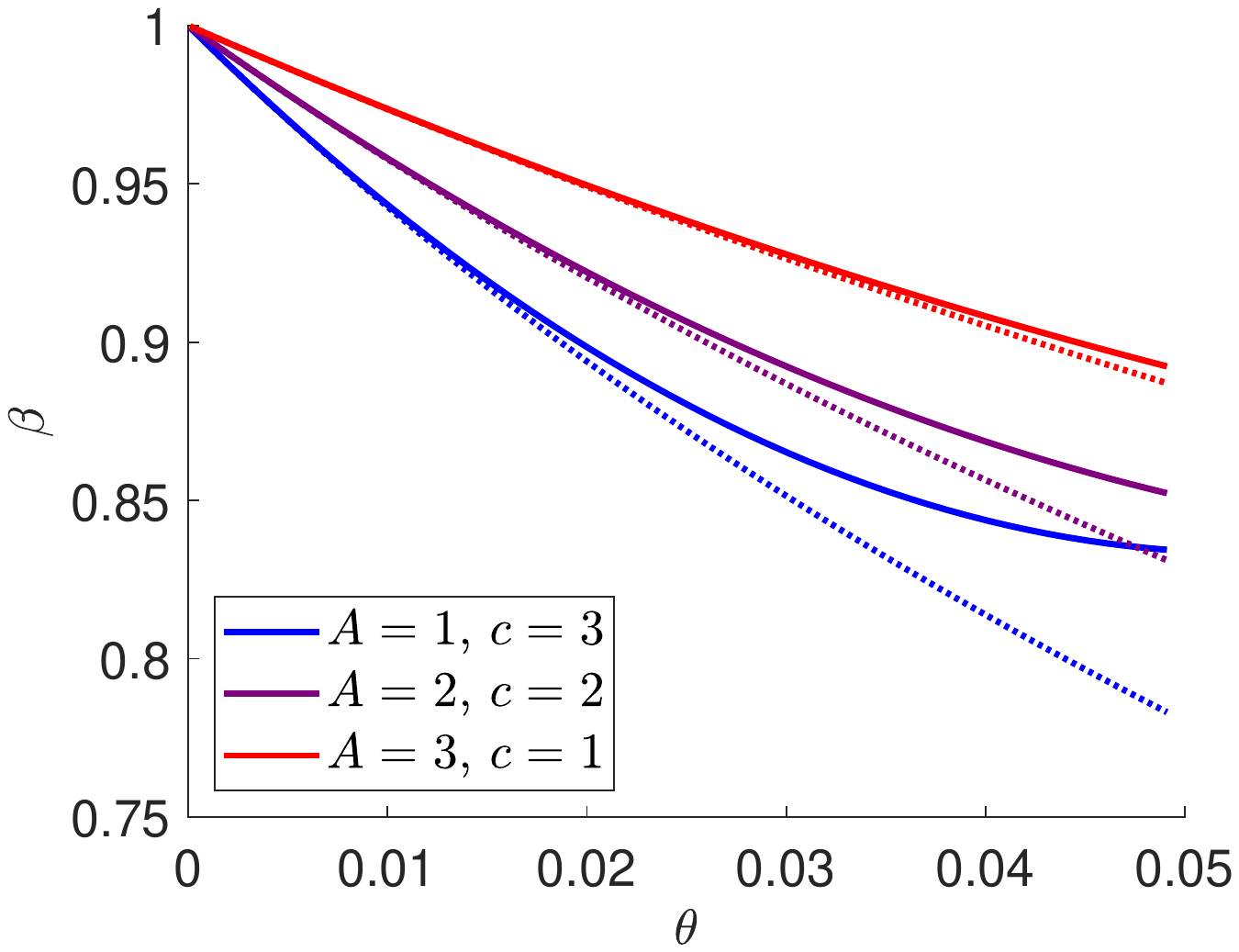}}
	\end{center}
	\vspace{-1em}
	\caption{Dotted curves show the exact equilibrium quantities for risk-aversion level $\theta A$ and transaction cost $\theta c$. Solid curves show the approximation of equilibrium quantities given by Proposition \ref{prop:single_auction_approx}. Other parameters are $\sigma = 1$ and $\Sigma_0^v = 1$. \label{fig:approx}}
\end{figure}

\section{Continuous-Time Auction}
\label{sec:continuousTime}

In this section, we present a continuous-time model which incorporates an analogous transaction cost to the previous section. After introducing the model in Section \ref{sec:continuousTimeModel}, we give the notion of market equilibrium which we consider. Some mathematical machinery is then established which is necessary for classifying equilibrium in continuous-time. When the insider is risk-neutral, we can solve for linear equilibrium in closed form. When they are risk-averse, the equilibrium is classified by a solution to an ODE, of which we prove there is always a solution. Finally, we analyze the dependence on the model parameters of the equilibrium solution, and give closed form expressions for the limits of the equilibrium as the transaction cost parameter tends to extreme values. 

\subsection{Continuous-Time Model}
\label{sec:continuousTimeModel}

We consider a similar setting to the previous section with three types of market participants, except orders are submitted and prices are set in continuous-time. Let the processes $P = (P_t)_{0\leq t\leq T}$ be the midprice of the asset, $X = (X_t)_{0 \leq t \leq T}$ be the insider's inventory, and  $Z = (Z_t)_{0 \leq t \leq T}$ be a standard Brownian motion. The cumulative orders submitted by the noise traders up to time $t$ is equal to $\sigma Z_t$, where $\sigma > 0$ is constant. We let $Y = (Y_t)_{0 \leq t \leq T}$ be the total number of shares submitted to the market up to time $t$. That is, 
\begin{align}
Y_t &= X_t - X_0 + \sigma\, Z_t\,. \label{eq:Yt}
\end{align}
As in the single-auction model, we assume that the market maker observes the aggregate order flow $Y$ but neither of the components $X$ or $Z$. We let the time $T$ value of the asset $v$ be normally distributed with mean $v_0$ and variance $\Sigma_0^v$, independent of the Brownian motion $Z$. The insider is privy to the realization of $v$ at the initial time $t = 0$.

It is helpful to explicitly denote the sets of information available to the various market participants at a given time. We define the filtrations $\mathcal{F}_t^Y$ and $\mathcal{F}_t^Z$ to be the filtrations generated by the processes $Y$ and $Z$, respectively. As the market maker observes only $Y$, we define $\mathcal{F}_t^M = \mathcal{F}_t^Y$ to be the market maker's information. The insider is aware of her trades, so she can back out the path of liquidity trades by observing historical prices (see \cite{back1992insider}), and the insider has received the realization of $v$ before trading begins. So we set the insider's information to be $\mathcal{F}_t^I = \sigma( \mathcal{F}_t^Z \cup \sigma ( v ))$. 

We assume that insider employs trading strategies that yield an absolutely continuous inventory process almost surely. To this end, we write 
\begin{align}
	d X_t &= \theta_t\, d t\,, 	\label{eq:dXt}
\end{align}
and analogously to the single-auction model, we define the insider's transaction price $\widehat{P} = (\widehat{P}_t)_{0 \leq t \leq T}$ to be 
\begin{align}
	\widehat{P}_t = P_t + c \, \theta_t\,, \label{eq:transactionPrice}
\end{align}
where $c$ is a positive constant. 

The assumption that the insider's inventory path is absolutely continuous may seem restrictive. In most other model formulations with asymmetric information, the insider may trade according to any process which is adapted to the filtration $\mathcal{F}_t^I$, including processes with discontinuities or diffusive components. However, it is generally the case that in equilibrium the insider's inventory is absolutely continuous. We do not attempt to model the transaction cost for trading strategies which are not absolutely continuous. There are many works in the context of portfolio optimization which include transaction costs of various forms (among many others, see \cite{magill1976portfolio}, \cite{davis1990portfolio}, and \cite{muhle2017primer}), but it is unlikely that the optimal strategies which result in those models would be consistent with the notion of equilibrium that we consider.

At the end of the trading horizon, the insider's wealth is $X_T \, v$ minus the cost of trading throughout the period. Thus, the terminal wealth $W_T$ is given by
\begin{align}
	W_T &= X_T \,v - \int_0^T \widehat{P}_s\, \theta_s\, d s\,.\label{eq:terminalWealth}
\end{align}
The insider chooses a strategy to maximize her expected utility of terminal wealth. That is, $\theta$ is chosen to achieve 
\begin{align}
	\sup_{\theta \in \mathcal A} \mathbb{E}[ U(W_T) \,|\,\mathcal{F}_0^I ]\,,\label{eq:generalOptimalUtility}
\end{align}
where $U$ is an increasing, concave function and the set of admissible strategies is
\begin{align}
\mathcal A = \{ \theta \mid \theta \text{ is } \mathcal{F}^I\text{-predicable and } \mathbb{E}\biggl[\int_{0}^T \theta^2_t \, dt\biggr] < \infty\}\,.
\end{align}
The market maker is tasked with setting the price of the asset efficiently at all times $0 \leq t \leq T$ and so should choose the price according to
\begin{align}
	P_t &= \mathbb{E} [ v \,|\,\mathcal{F}_t^M]\,. \label{eq:efficiencyCondition}
\end{align}

We now define the concept of equilibrium in continuous-time. This concept is analogous to the single-auction equilibrium concept of Section \ref{sec:singleAuctionEquil}.

\begin{definition}	\label{def:continuousTimeEquilibrium}
	A \textbf{continuous-time equilibrium} $(P, X)$ consists of a price process $P$ and an inventory process $X$ such that 
	\begin{itemize}
		\item given a price process $P$, the inventory process $X$ achieves the supremum in \eqref{eq:generalOptimalUtility}, and 
		\item given an inventory process $X$, the price process $P$ is efficient, i.e. $P$ satisfies \eqref{eq:efficiencyCondition}.
	\end{itemize}
\end{definition}

\subsection{Continuous-Time Equilibrium}
\label{sec:continuousTimeEquilibrium}

The goal of this section is to classify a linear equilibrium in continuous-time for an exponentially risk-averse or risk-neutral insider. We will restrict our consideration to pricing rules and trading strategies that have a form which are analogous to those of the single-auction equilibrium presented in Theorem \ref{thm:singleAuctionEquil}. Namely, increments of the price set by the market maker will be linear with respect to increments of trade volume, and the insider's trading strategy will be linear with respect to $v-P_t$. Before stating the classification of equilibrium in continuous-time we need to develop some additional mathematical machinery. These are contained in Lemmas \ref{lem:optimalTradingStrategy}, \ref{lem:filtering}, and \ref{lem:genExistAndUnique}.

Using expressions \eqref{eq:dXt}, \eqref{eq:transactionPrice}, and \eqref{eq:terminalWealth}, we rewrite the expression for terminal wealth $W_T$ as
\begin{align}
	W_T &= \int_0^T (v - P_s - c\, \theta_s )\,\theta_s\, ds\,,
\end{align}
where we have taken $X_0 = 0$ for simplicity. The computations that follow can be carried out with $X_0 \neq 0$, but this choice will not affect the insider's optimal trading strategy (it will have only a minor effect on the insider's value function).

We will develop optimal insider trading strategies by first fixing the market maker's pricing rule and then solving the Hamilton-Jacobi-Bellman (HJB) equation associated with the optimization problem \eqref{eq:generalOptimalUtility}. We will then subsequently verify that the solution is optimal. This motivates us to introduce the dynamic version of \ref{eq:generalOptimalUtility} as
\begin{align}
	H(t,P) &= \sup_{\theta \in \mathcal A} \mathbb{E}\biggl[ U\biggl( \int_t^T (v -  P_s - c \, \theta_s )\, \theta_s \, ds \biggr) \, \biggl|\,\mathcal{F}_t^I\biggr]\,. \label{eq:generalValueFunction}
\end{align} 
For a given time $t \in [0,T]$ and midprice $P_t$, the quantity $H(t,P_t)$ gives the optimal expected utility that the insider can achieve by trading during the time interval $[t,T]$. We refer to $H$ as the insider's value function. 

\begin{lemma}[Insider's Value Function and Optimal Strategy] \label{lem:optimalTradingStrategy}
		Let $A\geq 0$ and let the utility function $U$ be as defined in \eqref{eq:exponentialUtilityFunction}. Let $\lambda$ be a positive, bounded, deterministic function such that the Riccati differential equation
		\begin{align}
			\frac{dh(t)}{dt} &= - \frac{ (1 - 2\, A\, c\, \sigma^2 ) \lambda^2(t) }{ c }\, h^2(t) + \frac{ \lambda(t) }{ c }\,h(t) - \frac{ 1 }{ 4\, c}\,, &  h(T) &= 0\,,	\label{ode:riccatiODE}
		\end{align}
		has a global solution $h:[0,T]\rightarrow\mathbb{R}$. Suppose the midprice process $P$ is given by 
		\begin{align}
			P_t &= v_0 + \int_0^t \lambda(s)\, dY_s,
		\end{align}
		where $Y$ is given in \eqref{eq:Yt}. Then the insider's value function \eqref{eq:generalValueFunction} is given by
		\begin{align}
			H(t,P) &= \left\{\begin{array}{lr}- \exp\left\{- A \biggl( (v - P )^2 \, h(t) + \sigma^2 \, \int_t^T \lambda^2(s)\, h(s)\,ds \biggr) \right\}\,, & A > 0\,,\\
						(v - P )^2 \, h(t) + \sigma^2 \, \int_t^T \lambda^2(s)\, h(s)\,ds\,,  & A = 0\,,\end{array}\right.	\label{eq:valueFunction}
		\end{align}
		and the optimal strategy in feedback form is given by
		\begin{align}
			\theta_t^* &= \beta(t)\, (v - P_t), & \beta(t) &= \frac{ 1 - 2 \,\lambda(t)\, h(t) }{ 2 \, c }\,.	\label{eq:optimalTradingStrategy}
		\end{align}
\end{lemma}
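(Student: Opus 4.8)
The plan is to solve the insider's control problem by dynamic programming. Since $v$ is revealed at time $0$ and is therefore $\mathcal{F}_t^I$-measurable for every $t$, I would treat $v$ as a known constant parameter and take the midprice $P$ as the single state variable; under the prescribed pricing rule and \eqref{eq:Yt} its dynamics are $dP_t = \lambda(t)\,\theta_t\,dt + \lambda(t)\,\sigma\,dZ_t$, driven by the Brownian motion $Z$ which the insider observes. For the exponential case $A>0$ I would first exploit the multiplicative structure of the utility: writing $w_t=\int_0^t (v-P_s-c\,\theta_s)\,\theta_s\,ds$ for the accumulated wealth, the value of the augmented problem factors as $e^{-A w}\,H(t,P)$, so that the wealth coordinate drops out of the Hamilton--Jacobi--Bellman (HJB) equation and one is left with a scalar PDE for $H(t,P)$ on $[0,T]\times\mathbb{R}$ with terminal data $H(T,P)=-1$. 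The risk-neutral case $A=0$ is treated directly, the relevant quantity being $\int_0^t(v-P_s-c\,\theta_s)\,\theta_s\,ds + H(t,P_t)$.

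Next I would insert the ansatz \eqref{eq:valueFunction} into the HJB equation and carry out the pointwise maximization over $\theta$. Because the running reward $(v-P-c\,\theta)\,\theta$ is strictly concave in $\theta$ (recall $c>0$), the first-order condition has a unique solution, and a short computation produces the feedback maximizer $\theta^*=\tfrac{1-2\,\lambda(t)\,h(t)}{2c}\,(v-P)$, which is exactly \eqref{eq:optimalTradingStrategy}. Substituting $\theta^*$ back, the role of the integral term $\sigma^2\int_t^T\lambda^2 h\,ds$ in \eqref{eq:valueFunction} is precisely to cancel the state-independent contribution of the diffusion term; after this cancellation every surviving term carries a factor $(v-P)^2$, the equation collapses to the Riccati equation \eqref{ode:riccatiODE}, and the terminal condition $H(T,P)=-1$ forces $h(T)=0$. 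Thus, whenever \eqref{ode:riccatiODE} admits a global solution $h$ on $[0,T]$, the function \eqref{eq:valueFunction} solves the HJB equation and \eqref{eq:optimalTradingStrategy} is the candidate optimizer.

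It remains to verify that this candidate is genuinely the value function and that $\theta^*$ is admissible and optimal; this is the step requiring care. I would first check admissibility of $\theta^*$: boundedness of $\lambda$ and continuity of the global solution $h$ on the compact interval $[0,T]$ make $\beta$ a bounded deterministic function, and under $\theta^*$ the process $v-P_t$ is a Gaussian linear diffusion with uniformly bounded second moment, so $\mathbb{E}\big[\int_0^T(\theta^*_t)^2\,dt\big]<\infty$. For optimality I would run the standard martingale argument: for an arbitrary admissible $\theta$ set $N_t=e^{-A w_t}H(t,P_t)$ (so $N_0=H(0,P_0)$ and $N_T=U(W_T)$), apply It\^o's formula, and use the HJB inequality to conclude that $N$ has non-positive drift along any $\theta$ and zero drift along $\theta^*$. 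Taking expectations and passing to the limit along a localizing sequence then yields $H(0,P_0)\ge \mathbb{E}[U(W_T)]$ for every admissible $\theta$, with equality for $\theta^*$, which identifies $H$ as the value function and $\theta^*$ as optimal.

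The main obstacle is precisely this passage to the limit, namely establishing that the local (super)martingale is genuinely a true (super)martingale so that the localization converges. The difficulty is that $|N_t|=e^{-A(w_t+(v-P_t)^2 h(t)+\cdots)}$ is exponentially large in a Gaussian variable, so the required uniform integrability must come from genuine exponential-moment estimates rather than be invoked for free. I expect to control this using the boundedness of $h$ together with control of the sign of the exponent $(v-P)^2h$ (keeping $|H|$ suitably bounded), combined with Gaussian exponential-moment bounds for the linear state process, which remain finite in the parameter range guaranteed by the standing assumptions on $\lambda$ and $h$. The risk-neutral case $A=0$ avoids these exponential moments entirely: there $H$ is quadratic in $v-P$, the supermartingale argument applies to $\int_0^t(v-P_s-c\,\theta_s)\,\theta_s\,ds+H(t,P_t)$, and only the finite second moments already guaranteed by admissibility are needed.
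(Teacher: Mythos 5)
Your proposal follows essentially the same route as the paper's proof: write down the HJB equation for the state $P$ with $v$ as a known parameter, verify the ansatz \eqref{eq:valueFunction} by direct substitution with the pointwise maximization yielding the feedback rule \eqref{eq:optimalTradingStrategy} and the Riccati equation \eqref{ode:riccatiODE}, check admissibility of $\theta^*$ via the Gaussian linear dynamics of $Q_t = P_t - v$, and conclude by verification. The only difference is one of detail: where you spell out the martingale/localization argument and flag the uniform-integrability (exponential-moment) subtlety for $A>0$, the paper simply delegates this step to a standard verification result (citing \cite{pham2009continuous}), so your treatment is, if anything, more explicit than the paper's on that point.
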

\begin{proof}
	For a proof see Section \ref{proof:optimalTradingStrategy} in the appendix.
\end{proof}

In the statement of Lemma \ref{lem:optimalTradingStrategy} we have eliminated consideration of many arbitrary pricing rules by considering only functions $\lambda$ for which there is a solution to the ODE \eqref{ode:riccatiODE}. This is necessary in establishing this lemma in generality, as it is easy to find\footnote{If $\lambda_0$ is relatively large and decreases linearly to $\lambda_T$ which is relatively small, then the ODE will only have a solution on an interval of the form $(T^*,T]$ where $T^*>0$}  functions $\lambda$ for which this ODE does not have a solution defined on all of $[0,T]$. This is because for some pricing rules there may be incentive for the insider to acquire arbitrarily large inventory positions which will push prices in her favour, and then liquidating the acquired position when price impact is smaller, thereby making an unbounded profit. We make more comments on how we avoid this issue after we present the continuous-time equilibrium.

We use an HJB approach to prove this lemma, and the elimination of many pricing rules from consideration is analogous to a similar result in the standard Kyle model. If the HJB approach is taken in the Kyle model, then it is straightforward to show that there is no solution to the HJB equation if $\lambda$ is not constant in the risk-neutral case, or if $\lambda$ doesn't have very specific dynamics in the risk-averse case. Thus, acquiring an optimal trading strategy which is admissible requires the discarding of many pricing rules. On the other hand, using the HJB approach in our setting with transaction cost does have a significant difference compared to the standard Kyle model. Due to the quadratic nature of the performance criteria, we are immediately provided with the feedback form of the optimal trading strategy. In addition, the linear performance criteria in the Kyle model means that the HJB equation reduces to a system of PDE's rather than a single equation as in our model. The addition of the transaction cost makes many of the mathematical components of our problem more straightforward.

In the next Lemma we provide the details of the market maker's task of setting efficient prices. Recall that in the proof of Theorem \ref{thm:singleAuctionEquil} the pricing rule was shown to be efficient by applying the projection theorem for normal random variables directly to the expectation $\mathbb{E}[v\,|\,\Delta y]$, where we had assumed that the insider's trading strategy was linear in $v$. Our approach in continuous-time is analogous, but we need a generalization of the projection theorem to continuous-time. In following Lemma, we assume the insider follows a linear trading strategy and apply optimal filtering theory to compute $\mathbb{E}[ v \,|\, \mathcal{F}_t^M]$. 

\begin{lemma}[Market Maker's Efficient Pricing]\label{lem:filtering}
	Suppose that insider's inventory process is specified by
	\begin{align}
		dX_t &= \beta(t)\, (v - P_t )\, dt\,,
	\end{align}
	for a deterministic function $\beta$. Then the process $P$ specified by the dynamics
	\begin{align}
	dP_t &= \lambda(t) \, dY_t, & P_0 &= v_0\,,
	\end{align}
	where $Y$ is given by \eqref{eq:Yt} and
	\begin{align}
		\lambda(t) &= \frac{\beta(t) \Sigma(t)}{\sigma^2}\,, & \frac{d\Sigma(t)}{dt} &= - \sigma^2 \lambda^2(t)\,, & \Sigma(0) &= \Sigma_0^v\,, \label{eqn:lambda_Sigma}
	\end{align}
	satisfies the efficiency condition \eqref{eq:efficiencyCondition}. Furthermore, the function $\Sigma$ is equal to the posterior variance of $v$ given the market maker's information:
	\begin{align}
		\Sigma(t) &= \mathbb{E}[(v-P_t)^2|\mathcal{F}_t^M]\,.
	\end{align}
\end{lemma}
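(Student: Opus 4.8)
The plan is to recognize the market maker's task as a linear--Gaussian filtering problem and to invoke optimal filtering theory, exploiting that $v$ is a constant Gaussian signal. Substituting the posited strategy into \eqref{eq:Yt}, the observation process has dynamics
\begin{align*}
dY_t = \beta(t)\,(v - P_t)\,dt + \sigma\,dZ_t.
\end{align*}
Since the candidate price $P$ defined by $dP_t = \lambda(t)\,dY_t$ is a deterministic functional of the observed path of $Y$, it is $\mathcal{F}_t^M$-adapted, so the term $-\beta(t)P_t\,dt$ is known to the market maker. The estimation target $v$ is a (trivially linear) constant signal, so the conditionally Gaussian filtering theorem applies and yields a Gaussian posterior with deterministic variance; the goal is then to identify this posterior mean with $P$, i.e.\ to verify the efficiency condition \eqref{eq:efficiencyCondition}.

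First I would confirm well-posedness. Eliminating $\lambda$ from \eqref{eqn:lambda_Sigma} shows that $\Sigma$ solves the autonomous Riccati equation $d\Sigma/dt = -\beta^2(t)\,\Sigma^2/\sigma^2$ with $\Sigma(0)=\Sigma_0^v$, which has a unique positive solution on $[0,T]$; hence $\lambda=\beta\Sigma/\sigma^2$ is a well-defined deterministic function, and inserting $dY_t = dX_t + \sigma\,dZ_t$ makes $P$ the unique strong solution of a linear SDE driven by $Z$. Next I would write down the filter for $m_t = \mathbb{E}[v\,|\,\mathcal{F}_t^M]$ with deterministic posterior variance $\gamma_t = \mathbb{E}[(v-m_t)^2\,|\,\mathcal{F}_t^M]$, which for this observation scheme reads
\begin{align*}
dm_t &= \frac{\beta(t)\,\gamma_t}{\sigma^2}\,\bigl(dY_t - \beta(t)\,(m_t - P_t)\,dt\bigr), & \frac{d\gamma_t}{dt} &= -\frac{\beta^2(t)\,\gamma_t^2}{\sigma^2}, & \gamma_0 &= \Sigma_0^v.
\end{align*}
The variance equation is identical to the autonomous Riccati equation solved by $\Sigma$ with the same initial datum, so $\gamma_t = \Sigma(t)$ by uniqueness, and the Kalman gain $\beta\gamma/\sigma^2$ coincides with $\lambda$.

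The one genuinely delicate point is the closed-loop structure: the drift of the observation contains the very quantity $P_t$ that I claim equals the filter $m_t$. I would resolve this directly rather than by appealing to filter uniqueness in the abstract. Using $\gamma=\Sigma$ and $\beta\gamma/\sigma^2=\lambda$, the mean equation becomes $dm_t = \lambda(t)\,dY_t - \lambda(t)\beta(t)\,(m_t - P_t)\,dt$, while by construction $dP_t = \lambda(t)\,dY_t$. Subtracting and setting $D_t := m_t - P_t$ gives the linear homogeneous equation $dD_t = -\lambda(t)\beta(t)\,D_t\,dt$ with $D_0 = m_0 - P_0 = v_0 - v_0 = 0$, whence $D_t \equiv 0$ and $P_t = m_t = \mathbb{E}[v\,|\,\mathcal{F}_t^M]$. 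The second assertion then follows immediately, since $\Sigma(t) = \gamma_t = \mathbb{E}[(v-m_t)^2\,|\,\mathcal{F}_t^M] = \mathbb{E}[(v-P_t)^2\,|\,\mathcal{F}_t^M]$. The remaining care lies in checking the regularity hypotheses of the filtering theorem, namely the square-integrability of $\beta$ and $\lambda$ and the positivity of $\Sigma$ keeping the coefficients finite on $[0,T]$, both of which follow from the assumed boundedness of $\lambda$ together with the explicit solution of the Riccati equation.
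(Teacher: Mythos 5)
Your proof is correct and takes essentially the same route as the paper, whose entire proof is a one-line citation of the conditionally Gaussian filtering theorem (\cite[Theorem 12.1]{liptsershiryaev2001statistics}) that you have applied, with your explicit resolution of the closed-loop drift via $D_t := m_t - P_t$ making rigorous precisely what that citation leaves implicit. One peripheral slip: this lemma does not assume $\lambda$ bounded (that hypothesis belongs to Lemma \ref{lem:optimalTradingStrategy}), so your final regularity checks should instead be phrased as conditions on $\beta$ (e.g.\ $\beta\in L^2([0,T])$), from which the explicit Riccati solution $\Sigma(t)=\Sigma_0^v\bigl(1+\Sigma_0^v\int_0^t\beta^2(s)\,\sigma^{-2}\,ds\bigr)^{-1}$ yields positivity of $\Sigma$ and finiteness of $\lambda=\beta\,\Sigma/\sigma^2$ on $[0,T]$.
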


\begin{proof}
	The result is a direct application of \cite[Theorem 12.1]{liptsershiryaev2001statistics}. \qed
\end{proof}

The previous two lemmas are analogous to the steps in the single-auction case. The reader will recall that the proof of Theorem \ref{thm:singleAuctionEquil} was done in two steps. The first step was the computation of the insider's optimal strategy for a fixed, linear pricing rule. In discrete time, this came down to solving an algebraic equation, of which Lemma \ref{lem:optimalTradingStrategy} is the continuous-time analogue. The second step was the computation of an efficient pricing rule for a fixed, linear trading strategy, and Lemma \ref{lem:filtering} is the corresponding result in continuous-time.

We reduced the computation of the linear single-auction equilibrium to coupled algebraic equations, which were subsequently reduced to a single algebraic equation. In continuous-time, we reduce the computation of an equilibrium to two coupled, nonlinear ODEs for which one ODE is prescribed an initial condition and the other is prescribed a terminal condition. This motivates the following definition. 

\begin{definition}
	Let $T,\xi_0, \xi_T \in \mathbb{R}$ be constants with  $T > 0$. Let $x:\mathbb{R}\rightarrow\mathbb{R}^2$ and $F: \mathbb{R}^2 \rightarrow \mathbb{R}^2$. A \textbf{forward-backward ordinary differential equation (FBODE)} is a system of the form
	\begin{align}
	\frac{dx(t)}{dt} &= F(x(t)), & 	\left\{ \begin{array}{l}
											x_1(0) =  \xi_0\\
											x_2(T) = \xi_T
											\end{array}
											\right. \,.
	\end{align}
\end{definition}

In the following lemma, we state the FBODE that will appear in our continuous-time equilibrium and prove the existence and uniqueness of a solution.
\begin{lemma}[Solution to FBODE]\label{lem:genExistAndUnique}
	Let $A \geq 0$, $c\geq 0$, and $\Sigma^v_0 > 0$. For $x \in \mathbb{R}^2$, let us define the function $F: \mathbb{R}^2 \rightarrow \mathbb{R}^2$ as 
	\begin{align}
	F(x)  &=  \begin{pmatrix}
	F_1(x) \\ F_2(x)
	\end{pmatrix}, & F_1(x) &= - \frac{ \sigma^2 \, x_1^2 }{ 4 \,( c\, \sigma^2 + x_2 )^2 }\,, & F_2(x) &= -\frac{\sigma^2\,x_1\,(c\,\sigma^2 + x_2 - 2\,A\,x_2^2)}{4\,(c\,\sigma^2 + x_2)^2}\,. \label{eq:F}
	\end{align}
	Then the FBODE
	\begin{align}
	\frac{dx(t)}{ dt } &= F(x(t))\,, & \left\{ \begin{array}{l}
	x_1(0) =  \Sigma^v_0\\
	x_2(T) = 0
	\end{array}
	\right. \,, \label{ode:FBODE}
	\end{align}
	has a solution. If $c>0$ then this solution is unique and it satisfies $x_1(t) > 0$ for all $t \in [0,T]$ and $x_2(t) > 0$ for all $t \in [0,T)$. If $c=0$ then the solution is unique if we impose $x_2(t) > 0$ for all $t\in[0,T)$. In this case the solution satisfies $x_1(t) > 0$ for all $t\in[0,T)$ and $x_1(T) = 0$.
	
	In addition, when $A=0$ the solution is given by
	\begin{align}
		x_1(t) &= 2\, c\, \lambda_0\, \sigma^2  + \lambda_0^2\, \sigma^2\,  (T-t), & x_2(t) &= \frac{ T-t }{ 2\, \lambda_0\,(T-t) + 4\, c }\,, & \lambda_0 &= \sqrt{\frac{\Sigma^v_0}{\sigma^2\,T} + \frac{c^2}{T^2}} - \frac {c}{T}\,, \label{eq:odeAlphaZeroSol}
	\end{align}
	and when $c=0$ the solution which satisfies $x_2(t)>0$ for $t\in[0,T)$ is given by
	\begin{align}
		x_1(t) &= \frac{4\,\sigma^2\,\Lambda_K^2\,(T-t)}{(A\,\Sigma^v_0+2\,S)\,(A\,\Sigma^v_0\frac{2\,t-T}{T} + 2\,S)}\,, & x_2(t) &= \frac{\sigma^2\,\Lambda_K^2\,(T-t)}{A\,\Sigma^v_0+2\,S}\,, & \Lambda_K &= \sqrt{\frac{\Sigma^v_0}{\sigma^2\,T}}\,, & S &= \sqrt{\biggr(\frac{A\,\Sigma^v_0}{2}\biggl)^2 + \Lambda_K^2}\,.\label{eq:odeCZeroSol}
	\end{align}
\end{lemma}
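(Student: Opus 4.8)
The plan is to prove existence and uniqueness for the non-degenerate case $c>0$ by a shooting argument that is made transparent by a change of independent variable, and then to dispatch the singular case $c=0$ together with the two explicit formulas separately. For $c>0$ the field $F$ in \eqref{eq:F} is smooth on the half-space $\{x_2>-c\,\sigma^2\}$, so for each candidate value $a$ of the unknown initial datum $x_2(0)=a$ the system \eqref{ode:FBODE} with $x_1(0)=\Sigma_0^v$ admits a unique local solution. Since $\dot x_1=F_1\le 0$ and $F_1$ is proportional to $x_1^2$, a comparison with $\dot u=-k\,u^2$ shows that $x_1$ is strictly decreasing and remains strictly positive on $[0,T]$, which rules out blow-up and extends the solution to all of $[0,T]$. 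Because $x_1$ is strictly monotone, I would adopt $u=\ln x_1$ as the new independent variable; computing $dx_2/dx_1=F_2/F_1$ then collapses the system to the single autonomous Riccati equation
\[
\frac{dx_2}{du}=-2\,A\,x_2^2+x_2+c\,\sigma^2=:R(x_2).
\]
This autonomous structure is the crux of the whole argument.

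Next I would reduce the boundary value problem to one scalar equation in $a$. The quadratic $R$ has a single negative root $r_-$ and a single positive root $r_+$ (with $r_+=+\infty$ when $A=0$), and $R>0$ on $(r_-,r_+)$, so along any admissible trajectory $x_2$ is increasing in $u$ and hence decreasing in $t$. Normalising the reference solution $\psi$ by $\psi(0)=0$, every trajectory in $(r_-,r_+)$ is a translate $x_2(u;a)=\psi(u-s(a))$, where $s(a)$ denotes the value of $u$ at which the trajectory issuing from $(u_0,a)$ meets $x_2=0$, and $s$ is strictly decreasing in $a$ because $\psi$ is increasing. Writing the elapsed time as
\[
\mathcal T(a)=\int_{s(a)}^{u_0}\frac{4\,(c\,\sigma^2+x_2(u;a))^2}{\sigma^2\,e^{u}}\,du,\qquad u_0=\ln\Sigma_0^v,
\]
and substituting $w=u-s(a)$ recasts it as $\mathcal T(a)=\tfrac{4}{\sigma^2 e^{s(a)}}\int_0^{u_0-s(a)}(c\,\sigma^2+\psi(w))^2 e^{-w}\,dw$. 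As $a$ increases the shift $s(a)$ decreases, so the prefactor $e^{-s(a)}$ and the upper limit $u_0-s(a)$ both increase and, the integrand being positive, $\mathcal T$ is strictly increasing; one checks that $\mathcal T(a)\to 0$ as $a\to 0^+$ and $\mathcal T(a)\to\infty$ as $a\to r_+^-$ (or as $a\to\infty$ when $A=0$), since trajectories linger arbitrarily long near the stable equilibrium $r_+$. Hence $\mathcal T(a)=T$ has exactly one root $a^\ast\in(0,r_+)$, which delivers existence and uniqueness simultaneously, while the monotonicity of $x_2$ yields $x_2>0$ on $[0,T)$ and, as $s(a^\ast)$ is finite, $x_1>0$ on $[0,T]$.

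Finally I would treat $c=0$ and the explicit solutions. When $c=0$ the field \eqref{eq:F} is singular precisely on the terminal manifold $\{x_2=0\}$, and $x_2=0$ is now itself an equilibrium of $R$, so the terminal value is attained only as $u\to-\infty$, i.e. $x_1(T)=0$; this is exactly why uniqueness must be imposed through the side condition $x_2>0$ on $[0,T)$. I would verify \eqref{eq:odeCZeroSol} by direct substitution into \eqref{ode:FBODE}, confirming its boundary values and sign conditions, and establish uniqueness among positive trajectories either by running the translation/time-map analysis on $[0,T-\varepsilon]$ and letting $\varepsilon\downarrow 0$, or by a continuity argument as $c\downarrow 0$; the closed form \eqref{eq:odeAlphaZeroSol} for $A=0$ is likewise confirmed by substitution. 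The main obstacle is the uniqueness statement for $c>0$, and the payoff of the change of variables is precisely that it converts uniqueness into the strict monotonicity of the explicit scalar map $\mathcal T$, which would otherwise require a delicate sign analysis of the variational equation, whose Jacobian $DF$ has entries of mixed sign.
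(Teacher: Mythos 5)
Your proposal is correct, and it takes a genuinely different route from the paper's at the decisive step. Both arguments start from the same reduction: since $x_1$ is strictly decreasing, the orbit satisfies $dx_2/dx_1 = F_2/F_1 = (c\,\sigma^2 + x_2 - 2\,A\,x_2^2)/x_1$, which is exactly the paper's equation \eqref{eqn:rhoode} and, in your variable $u=\ln x_1$, your autonomous Riccati equation. From there the paper solves the Riccati equation in closed form, $\rho(x_1)=\bigl((\gamma+1)\,x_1^\gamma-(\gamma-1)\,k\bigr)/\bigl(4\,A\,(x_1^\gamma+k)\bigr)$ with $\gamma=\sqrt{1+8\,A\,c\,\sigma^2}$, shoots on the integration constant $k$, proves uniqueness by a sign analysis of the variational equation $\partial x_1(T;k)/\partial k<0$, and gets existence from explicit bracketing values $k_l<k_r$ with $g(k_l)>0>g(k_r)$. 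You instead shoot on $a=x_2(0)$, never solve the Riccati equation, and use translation invariance of the autonomous flow to express the elapsed time as the explicit map $\mathcal{T}(a)$, whose strict monotonicity and limits $0$ and $\infty$ deliver existence and uniqueness simultaneously; this is more elementary (no closed form, no variational equation) and handles $A=0$ uniformly via $r_+=\infty$, whereas the paper must treat $A=0$ separately by substitution. What the paper's heavier machinery buys is the explicit dependence $\gamma(c)$, $k(c)$ of the closed-form orbit, which is reused in the proof of Proposition \ref{prop:limits} to obtain continuity of the equilibrium in $c$ down to $c=0$ — your formulation would need to re-derive that. Two points to tighten: for uniqueness you should explicitly rule out $a\notin(0,r_+)$ (the lines $x_2=r_\pm$ are invariant and $x_2$ is monotone on each complementary region, so no such trajectory can reach $x_2(T)=0$; your monotonicity remark essentially contains this). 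And for $c=0$, of your two suggested uniqueness arguments, the continuity-in-$c$ one does not work — limits of unique solutions do not yield uniqueness at the limiting parameter — while the truncation idea does; in fact your time map extends verbatim to $c=0$ as a convergent improper integral, since $\psi(w)\sim C\,e^{w}$ as $w\to-\infty$, which is cleaner than either alternative you mention. The paper instead redoes the explicit computation at $c=0$, obtaining the two roots in \eqref{eqn:kpm} and discarding the negative one by the sign condition $x_2>0$, exactly as you discard the analogous branch.
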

\begin{proof}
	For a proof see Section \ref{proof:genExistAndUnique} in the appendix.
\end{proof}

Note that the results of Lemma \ref{lem:genExistAndUnique} include the possibility of $c=0$ even though our model specifies that $c$ is positive. The fact that the FBODE in Lemma \ref{lem:genExistAndUnique} has a desired solution for $c=0$ will be useful in proving the nature of equilibrium in the limit $c\rightarrow 0$, and so we include these results.

It is also useful to mention which equilibrium quantities are represented by the functions $x_1$ and $x_2$ as solutions to the FBODE. Taking $x_1$ and $x_2$ the solution when $c>0$, we will construct equilibrium by letting $\Sigma(t) = x_1(t)$ and $h(t) = x_2(t)/x_1(t)$, which is well defined due to the result of Lemma \ref{lem:genExistAndUnique} that $x_1(t)>0$ for all $t\in[0,T]$.

Before giving a continuous-time equilibrium it is useful to define the constant
\begin{align}
\Lambda_K &= \sqrt{\frac{\Sigma_0^v}{\sigma^2\,T}}\,. \label{eq:kyleContinuousAuctionLambda}
\end{align}
The constant $\Lambda_K$ is the continuous-time pricing rule of \cite{kyle1985continuous}. That is, when a risk-neutral insider faces a risk-neutral market maker on a frictionless exchange, the processes specified by the dynamics
\begin{subequations}
	\label{eq:kyleEquilibrium}
	\begin{align}
	dP_t &= \Lambda_K\, dY_t\,, & P_0 &= v_0\,, \label{eq:kyleP}\\
	dX_t &= \frac{1}{\Lambda_K\,(T-t)}\, (v - P_t)\, dt\,, & X_0 &= 0\,, 	\label{eq:kyleX}
	\end{align}
\end{subequations}
form an equilibrium. Writing the risk-neutral continuous-time equilibrium in terms of $\Lambda_K$ will help illuminate the effect that transaction costs have on the market participant's respective strategies.

\begin{theorem}[Continuous-Time Equilibrium]	\label{thm:continuousTimeEquilibrium}
	Let $U$ be as given in \eqref{eq:exponentialUtilityFunction} for some $A\geq 0$, and let $c>0$. Let $x = (x_1,x_2)$ be the unique solution to the FBODE \eqref{ode:FBODE}, and let $\Sigma(t)=x_1(t)$ and $h(t)=x_2(t)/x_1(t)$. Then the midprice process $P$ and inventory process $X$ specified by 
	\begin{align}
		dP_t &= \lambda (t)\, dY_t\,, & P_0 &= v_0\,, \label{eq:continuousTimeEquilPriceProcess}\\
		dX_t &= \beta(t)\,( v - P_t )\,dt\,, & X_0 &= 0\,, \label{eq:continuousTimeEquilInventoryProcess}
	\end{align}
	where $Y$ is given in \eqref{eq:Yt} and where
	\begin{subequations}
		\label{eqs:equilibriumBetaAndLambda}
		\begin{align}
		\beta(t) &= \frac{ \sigma^2	}{ 2 \,(c\, \sigma^2 + \Sigma(t)\,h(t) ) }\,, \label{eq:equilibriumBeta}\\ 
		\lambda(t) &= \frac{ \Sigma(t) }{ 2\,( c\, \sigma^2 + \Sigma(t)\,h(t) )	}\,,\label{eq:equilibriumLambda}
		\end{align}
	\end{subequations}
	form a continuous-time equilibrium. Furthermore, $\Sigma(t) = \mathbb{E}[ (v-P_t)^2\, |\,\mathcal{F}_t^M]$ and the value function $H$ is given by \eqref{eq:valueFunction}. When $A = 0$, $\lambda(t) \equiv \lambda$ is a constant and
	\begin{subequations}
		\label{eqs:riskNeutralEquilibriumBetaAndLambda}
		\begin{align}
			\beta(t) &= \frac{ 1 }{ \lambda\, (T-t) + 2\, c	}\,, \label{eq:riskNeutralEquilibriumBeta}\\
			\lambda &= \sqrt{ \Lambda_K^2 + \frac{c^2}{T^2}} - \frac {c}{ T }\,. \label{eq:riskNeutralEquilibriumLambda}
		\end{align}
	\end{subequations}
\end{theorem}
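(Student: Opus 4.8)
The plan is to assemble the two one-sided results, Lemmas \ref{lem:optimalTradingStrategy} and \ref{lem:filtering}, into a genuine fixed point: the FBODE \eqref{ode:FBODE} is built so that its solution produces a pair $(\lambda,\beta)$ that is simultaneously the insider's optimal response to the pricing rule and the market maker's efficient response to the trading strategy. Concretely, I would take the solution $x=(x_1,x_2)$ provided by Lemma \ref{lem:genExistAndUnique} (which for $c>0$ satisfies $x_1>0$ on $[0,T]$), set $\Sigma=x_1$ and $h=x_2/x_1$, so that $\Sigma h=x_2$, and define $\lambda,\beta$ by \eqref{eqs:equilibriumBetaAndLambda}, which then read $\lambda=x_1/(2(c\sigma^2+x_2))$ and $\beta=\sigma^2/(2(c\sigma^2+x_2))$. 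From these closed forms two purely algebraic identities drop out, namely $\lambda=\beta\Sigma/\sigma^2$ and $\beta=(1-2\lambda h)/(2c)$: the first is exactly the feedback relation of Lemma \ref{lem:filtering}, and the second is exactly the optimal feedback coefficient of Lemma \ref{lem:optimalTradingStrategy}, so the candidate strategy and the candidate price are mutually consistent.

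With the identities in hand I would verify the two differential constraints. Differentiating $\Sigma=x_1$ and inserting $F_1$ from \eqref{eq:F} gives $\dot\Sigma=-\sigma^2 x_1^2/(4(c\sigma^2+x_2)^2)=-\sigma^2\lambda^2$ with $\Sigma(0)=\Sigma_0^v$, which is precisely \eqref{eqn:lambda_Sigma}; hence Lemma \ref{lem:filtering} applies, the price $dP_t=\lambda\,dY_t$ is efficient for the linear strategy, and additionally $\Sigma(t)=\mathbb{E}[(v-P_t)^2\mid\mathcal{F}_t^M]$. Differentiating $h=x_2/x_1$ and substituting both entries of $F$ should reduce to $\dot h=\sigma^2(2Ax_2^2-c\sigma^2)/(4(c\sigma^2+x_2)^2)$ with $h(T)=x_2(T)/x_1(T)=0$; I would then confirm this coincides with the right-hand side of the Riccati equation \eqref{ode:riccatiODE} once $\lambda,h$ are expressed through $x_1,x_2$. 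Because $x_1,x_2$ are continuous and $c\sigma^2+x_2\ge c\sigma^2>0$ throughout $[0,T]$, the resulting $\lambda$ is positive, bounded and deterministic, so Lemma \ref{lem:optimalTradingStrategy} applies and the insider's optimal control is the feedback rule $\theta_t^*=\beta(t)(v-P_t)$ with value function \eqref{eq:valueFunction}. Together these verify both bullet points of Definition \ref{def:continuousTimeEquilibrium}.

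I expect the Riccati check to be the main obstacle, since it is where the somewhat opaque form of $F_2$ (in particular the $-2Ax_2^2$ term) must be reproduced. The cleanest route is to view the verification as showing that the change of variables $(\Sigma,h)\mapsto(x_1,x_2)=(\Sigma,\Sigma h)$ carries the coupled posterior-variance and Riccati ODEs onto the autonomous system \eqref{ode:FBODE}. The decisive cancellation is that, after clearing the common denominator $4c(c\sigma^2+x_2)^2$, the numerator of the Riccati right-hand side is $-(1-2Ac\sigma^2)x_2^2+2(c\sigma^2+x_2)x_2-(c\sigma^2+x_2)^2$, which collapses to $c\sigma^2(2Ax_2^2-c\sigma^2)$; the factor $c\sigma^2$ cancels and reproduces $\dot h$ and, back-substituting, the field $F_2$.

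Finally, for the risk-neutral case $A=0$ the second FBODE component becomes $\dot x_2=-\sigma^2 x_1/(4(c\sigma^2+x_2))$, and dividing it into $\dot x_1$ shows $x_1$ is proportional to $c\sigma^2+x_2$; therefore $\lambda=x_1/(2(c\sigma^2+x_2))$ is constant, and the initial condition $x_1(0)=\Sigma_0^v$ pins it to $\lambda=\sqrt{\Lambda_K^2+c^2/T^2}-c/T$. Since $\dot x_2$ is then constant and $x_2(T)=0$, one gets $x_2=\sigma^2\lambda(T-t)/2$, and substituting into \eqref{eq:equilibriumBeta} yields $\beta=1/(\lambda(T-t)+2c)$, which is \eqref{eqs:riskNeutralEquilibriumBetaAndLambda}; this matches the explicit solution recorded in Lemma \ref{lem:genExistAndUnique}.
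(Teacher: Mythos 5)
Your proposal is correct and follows essentially the same route as the paper's own proof: fix the FBODE solution, verify that $h=x_2/x_1$ solves the Riccati equation \eqref{ode:riccatiODE} (your cancellation of the numerator to $c\,\sigma^2\,(2\,A\,x_2^2-c\,\sigma^2)$ is exactly the paper's ``straightforward computation'') so that Lemma \ref{lem:optimalTradingStrategy} yields optimality, and verify $\dot\Sigma=-\sigma^2\lambda^2$ and $\lambda=\beta\Sigma/\sigma^2$ so that Lemma \ref{lem:filtering} yields efficiency and the posterior-variance identity. Your only deviation is the $A=0$ case, which you derive directly from the FBODE rather than quoting \eqref{eq:odeAlphaZeroSol} -- and this is actually a point in your favour, since your $x_2(t)=\tfrac{1}{2}\,\sigma^2\,\lambda\,(T-t)$ is the correct second component, whereas the quantity printed as $x_2$ in \eqref{eq:odeAlphaZeroSol} is in fact $h(t)=x_2(t)/x_1(t)$ (an apparent typo in the lemma), so your independent computation, not the literal match you assert, is what validates \eqref{eqs:riskNeutralEquilibriumBetaAndLambda}.
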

\begin{proof}
	For a proof see Section \ref{proof:continuousTimeEquilibrium} in the appendix.
\end{proof}

Recall that in Lemma \ref{lem:optimalTradingStrategy} we removed from consideration many pricing rules in which an associated ODE did not have a solution, as failure to do so would result in the ODE solutions blowing up within the interval $[0,T]$. Our method of classifying equilibrium in Theorem \ref{thm:continuousTimeEquilibrium} avoids this issue by taking the function $h$ as given (in terms of the unique solution to an FBODE which does not blow up), and then specifying the pricing rule $\lambda$ which yields $h$ as the solution to the ODE \eqref{ode:riccatiODE}.

%%%%REMARK
Definition \ref{def:continuousTimeEquilibrium} defines what it means for processes $(P,X)$ to form a continuous-time equilibrium. In Theorem \ref{thm:continuousTimeEquilibrium}, we gave deterministic functions $\beta$ and $\lambda$ such that the processes defined by $dP_t = \lambda(t)\,dY_t$ and $dX_t = \beta(t)\,(v -P_t)dt$ with $(P_0, X_0) = (v_0, 0)$ form a continuous-time equilibrium. While the functions $(\beta, \lambda)$ do not themselves form an equilibrium, the functions $(\beta, \lambda)$ correspond directly to a linear equilibrium $(P,X)$. In the sequel, we will refer to linear equilibria $(P, X)$ using $(\beta, \lambda)$, and we refer to $\beta$ as an \textit{equilibrium trading rule} and $\lambda$ as an \textit{equilibrium pricing rule}.
%%%%END REMARK

We will now discuss and interpret some characteristics of the continuous-time equilibrium of Theorem \ref{thm:continuousTimeEquilibrium}.

\begin{proposition}[Trading and Pricing Rule Monotonicity]\label{prop:compareStat}
	Let $\beta$ and $\lambda$ be the trading and pricing rules, respectively, corresponding to the continuous-time equilibrium of Theorem \ref{thm:continuousTimeEquilibrium}. Then,
	\begin{enumerate}
		\item $\beta$ is an increasing function of $t$, 
		\item if $A > 0$ then $\lambda$ is a decreasing function of $t$.
	\end{enumerate}
\end{proposition}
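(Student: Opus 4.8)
The plan is to express both equilibrium rules directly in terms of the FBODE solution $x=(x_1,x_2)$ and read off the signs of their derivatives. The key simplification is that $\Sigma(t)\,h(t) = x_1(t)\,\bigl(x_2(t)/x_1(t)\bigr) = x_2(t)$, so the defining relations \eqref{eqs:equilibriumBetaAndLambda} collapse to $\beta(t) = \sigma^2/[2(c\,\sigma^2 + x_2(t))]$ and $\lambda(t) = x_1(t)/[2(c\,\sigma^2 + x_2(t))]$. By Lemma \ref{lem:genExistAndUnique} we have $x_1>0$ on $[0,T]$ and $x_2>0$ on $[0,T)$, so both denominators are strictly positive and the whole proposition reduces to controlling derivatives of the solution of \eqref{ode:FBODE}.

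For the second statement I would simply differentiate $\lambda$ and substitute $x_1' = F_1$, $x_2' = F_2$ from \eqref{eq:F}. The numerator $x_1'(c\,\sigma^2+x_2) - x_1\,x_2'$ telescopes, because the combination $-(c\,\sigma^2+x_2) + (c\,\sigma^2 + x_2 - 2\,A\,x_2^2) = -2\,A\,x_2^2$ is all that survives, leaving
\[
\lambda'(t) = -\frac{A\,\sigma^2\,x_1(t)^2\,x_2(t)^2}{4\,(c\,\sigma^2 + x_2(t))^4}.
\]
When $A>0$ this is strictly negative on $[0,T)$ (and vanishes at $T$), so $\lambda$ is decreasing. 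This step is a routine computation.

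The first statement carries the real content. Since $\beta'(t) = -\sigma^2\,x_2'(t)/[2(c\,\sigma^2+x_2)^2]$, monotonicity of $\beta$ is equivalent to $x_2'<0$, i.e. to $F_2<0$, i.e. (using $x_1>0$) to positivity of the quadratic $g(x_2) := c\,\sigma^2 + x_2 - 2\,A\,x_2^2$. For $A=0$ this is immediate since $g = c\,\sigma^2 + x_2 > 0$ (and is also visible from the explicit solution \eqref{eq:riskNeutralEquilibriumBeta}). For $A>0$, $g$ is a downward parabola with one negative root $x_2^-$ and one positive root $x_2^+ = (1+\sqrt{1+8\,A\,c\,\sigma^2})/(4A)$, so the claim reduces to the uniform bound $x_2(t) < x_2^+$ on $[0,T]$.

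Establishing that bound is where I expect the main obstacle to lie. I would argue by a barrier/Gronwall estimate run backward from the terminal datum $x_2(T)=0 < x_2^+$. Since $x_1$ is decreasing (because $x_1' = F_1 < 0$) it is bounded above by $\Sigma_0^v$, and writing $g(x_2) = 2\,A\,(x_2^+ - x_2)(x_2 - x_2^-)$ shows that on the strip $0 \le x_2 \le x_2^+$ the gap $\phi := x_2^+ - x_2$ satisfies a \emph{linear} differential inequality $|\phi'| \le K\,\phi$ with $K$ depending only on $A,c,\sigma,\Sigma_0^v$. Gronwall then keeps $\phi$ strictly positive on all of $[0,T]$: if $x_2$ first reached $x_2^+$ at some time (scanning back from $T$), the estimate on the intervening interval would contradict $\phi$ vanishing in finite time. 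Hence $x_2 < x_2^+$, so $g(x_2)>0$, $x_2'<0$, and $\beta'>0$ throughout. Before committing to this, I would check whether the same bound already emerges from the invariant-region construction used to prove Lemma \ref{lem:genExistAndUnique}; if so it can be quoted directly, and otherwise the Gronwall argument above supplies it.
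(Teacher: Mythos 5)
Your proposal is correct and follows essentially the same route as the paper: both reduce via $\Sigma(t)\,h(t)=x_2(t)$, obtain the identical formula $\lambda'(t)=-A\,\sigma^2\,x_1^2\,x_2^2/\bigl(4\,(c\,\sigma^2+x_2)^4\bigr)$ for part 2, and for part 1 reduce $\beta'>0$ to the bound $x_2 < \bigl(1+\sqrt{1+8\,A\,c\,\sigma^2}\bigr)/(4A)$, which the paper establishes by the same contradiction with the terminal condition $x_2(T)=0$ that your backward Gronwall estimate makes precise (the paper's one-line invariance claim glosses over the touching case that your argument handles). Your closing guess is also right: the bound can be quoted from the proof of Lemma \ref{lem:genExistAndUnique}, since there $x_2=\rho(x_1)$ with $k>0$, and $\rho(x_1)<(\gamma+1)/(4A)$ holds directly.
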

\begin{proof}
	For a proof see Section \ref{proof:compareStat} in the appendix.
\end{proof}

In equilibrium, $\beta$ is increasing in time and $\beta(T) = 1/ 2 \,c$, so
\begin{align}
\label{eq:betaBounds}
	0 & <\beta(t) < \frac 1{2\,c}, & t&\in [0,T)\,.
\end{align}
To gain some intuition as to why the trading rule is bounded by $1/2c$, let $b(t)$ be an arbitrary trading rule and consider the expression for terminal wealth
\begin{align}
	W_T &= \int_{0}^T b(t)\, ( 1 - c\, b(t))\, (v - P_t)^2\, dt\,.\label{eq:linearWealth}
\end{align}
Thus, at time $t$ the insider gains wealth at a rate equal to $b(t)\, ( 1 - c\, b(t))\, (v - P_t)^2$. This expression is positive for $0<b(t)<1/c$ and maximized at $b(t)=1/2c$. However, in choosing $b(t)=1/2c$ price impact effects are also maximized and this lowers the potential gain of wealth at future times. The optimal balance of instantaneous and future gains is therefore achieved for trading rules which are bounded by $1/2c$.

In the absence of transaction costs, the authors of \cite{kyle1985continuous} and \cite{baruch2002insider} show that in equilibrium both the risk-neutral and exponentially risk-averse insider, respectively, take trading rules $\beta$ that blow up as $t$ approaches $T$ so as to force the midprice to $v$, resulting in $\Sigma(T) = 0$. For a non-zero transaction cost $c$, the discussion above shows that an insider of any risk tolerance never wishes to choose $\beta > 1 / 2c$. In this setting, the insider does not reveal the true value of the asset to the market maker by the end of the trading period i.e. $\Sigma(T) > 0$.

\section{Parameter Dependence}
\label{sec:parameterDependence}

In this section, we discuss the dependence of the equilibrium of Theorem \ref{thm:continuousTimeEquilibrium} on model parameters. First we summarize the effects of varying model parameters on the continuous-time equilibrium by numerically solving the associated FBODE. Then we study the limits of the equilibrium rules as the transaction cost parameters are taken to extreme values.

In analyzing the effects of varying the model parameters on the resulting equilibria, we consider the solution, $x$, to the FBODE \eqref{ode:FBODE} to be a function of the underlying parameters $x(t) \equiv x(t;\Theta)$, where $\Theta$ is a collection of parameters of interest. An ideal approach would then be to differentiate the FBODE \eqref{ode:FBODE} with respect to $\Theta_i$ to get dynamics of $\partial x/\partial \Theta_i$. Unfortunately the non-linearities make this approach highly intractable, and so we resort to demonstrating these dependencies by numerically solving the FBODE for various sets of the underlying parameters.

In Figure \ref{fig:dependence_c} we show the effect of varying the transaction cost parameter $c$. As would be expected of the trading rule $\beta$, when there are larger transaction costs the insider trades less aggressively as demonstrated in the left panel. Consequently, the price impact $\lambda$ shown in the middle panel decreases for larger transaction costs because net order flow contains less information. In addition the variance of the market maker's estimate decreases more slowly and price discovery takes comparatively more time, as seen in the right panel.

\begin{figure}
	\begin{center}
		{\includegraphics[trim=140 240 140 240, scale=0.45]{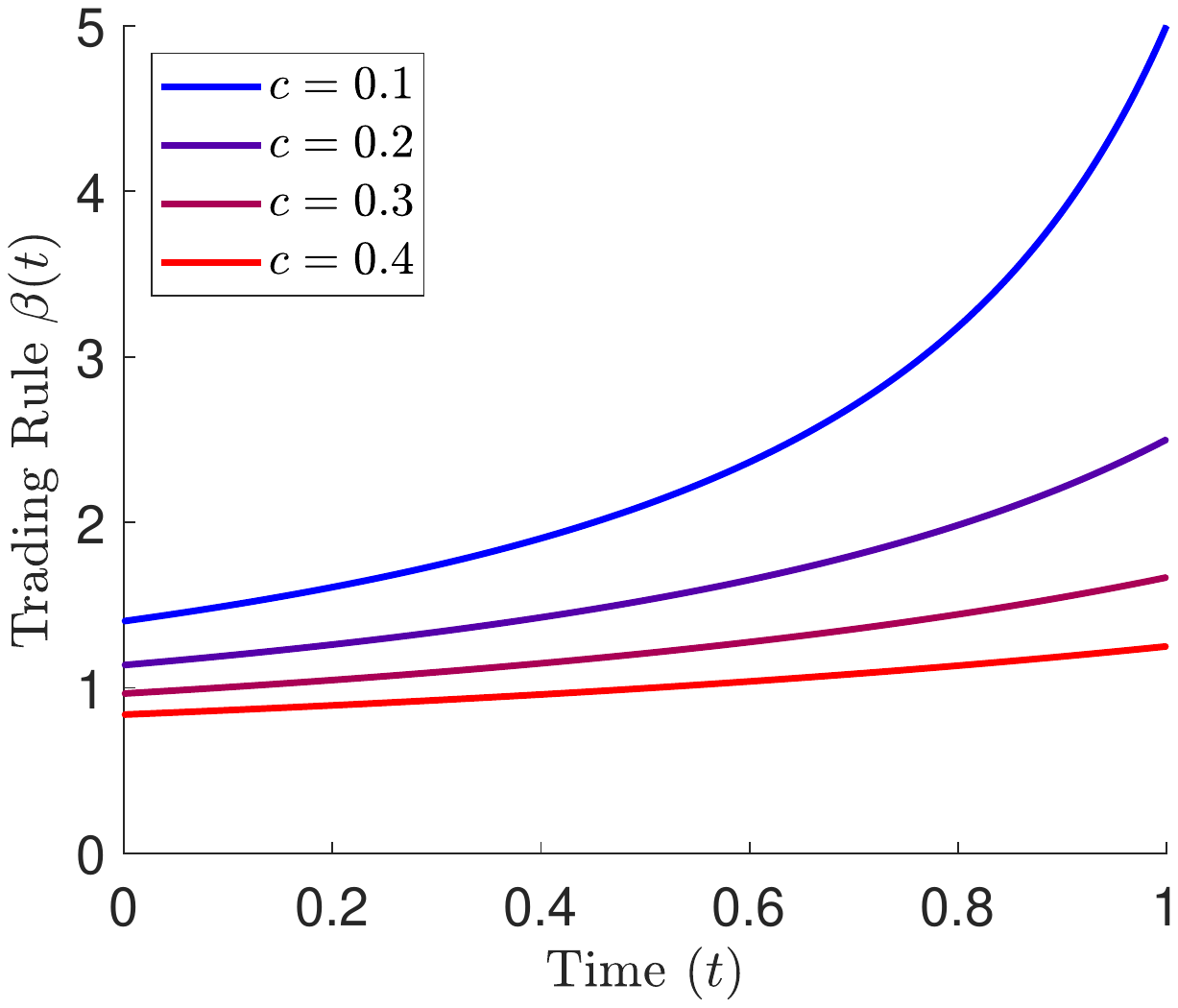}}\hspace{8mm}
		{\includegraphics[trim=140 240 140 240, scale=0.45]{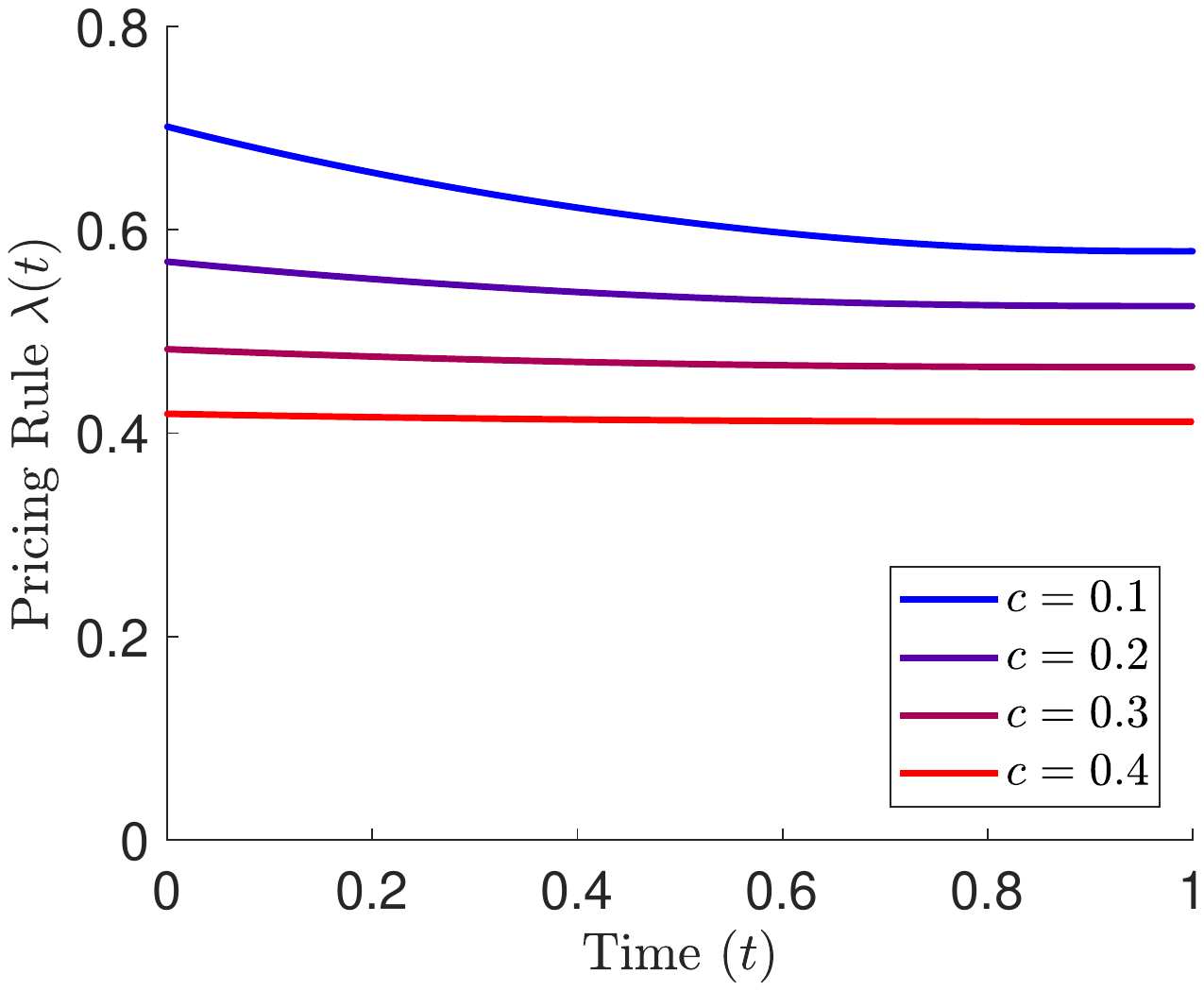}}\hspace{8mm}
		{\includegraphics[trim=140 240 140 240, scale=0.45]{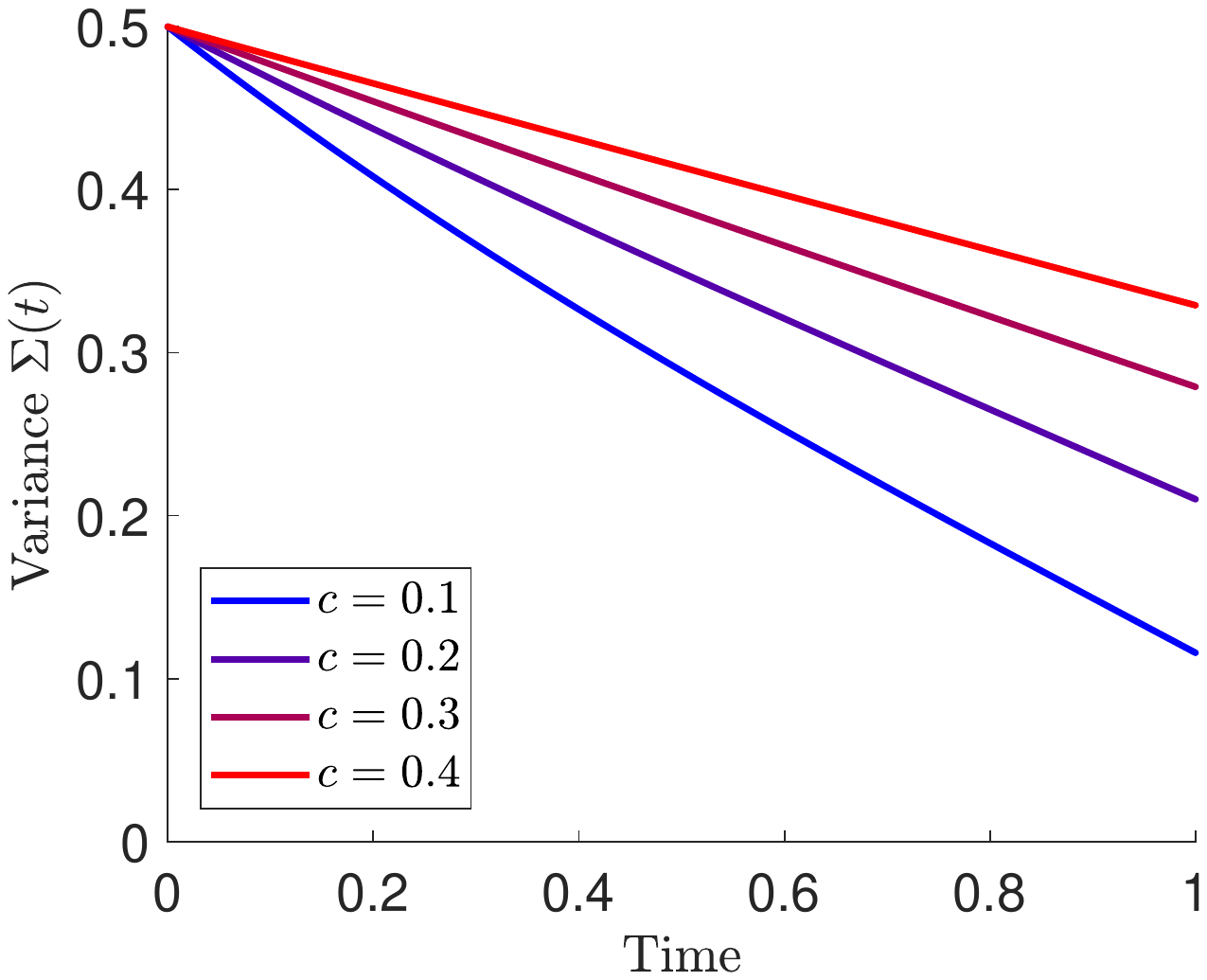}}
	\end{center}
	\vspace{-1em}
	\caption{Shown are the functions $\beta$, $\lambda$, and $\Sigma$ in equilibrium for various values of the transaction cost parameter, $c$. Other parameter values are $A = 1$, $\sigma = 1$, and $\Sigma^v_0 = 0.5$. \label{fig:dependence_c}}
\end{figure}

In Figure \ref{fig:dependence_A} we show the effect of varying the risk-aversion parameter $A$. These results are qualitatively similar to those found in \cite{baruch2002insider}. In particular, larger values of $A$ mean that the insider trades more aggressively at the beginning of the trading interval, as seen in the left panel by larger values of the trading rule $\beta$, because this is when they are exposed to the greatest amount of risk posed by the noise traders. At the end of the trading interval, the trading rule converges to the same finite value regardless of risk-aversion level because the remaining risk exposure converges to zero as $t\rightarrow T$. The finite limit is a consequence of the positive transaction cost $c$, whereas in \cite{baruch2002insider} this limit would be infinite. The more aggressive trading at the beginning of the trading interval means there is more information contained in the net order-flow, and thus we see in the middle panel that price impact is larger at earlier times compared to later times for any fixed value of risk-aversion $A>0$. The right panel demonstrates that price discovery also occurs faster when there is increased risk-aversion.

\begin{figure}
	\begin{center}
		{\includegraphics[trim=140 240 140 240, scale=0.45]{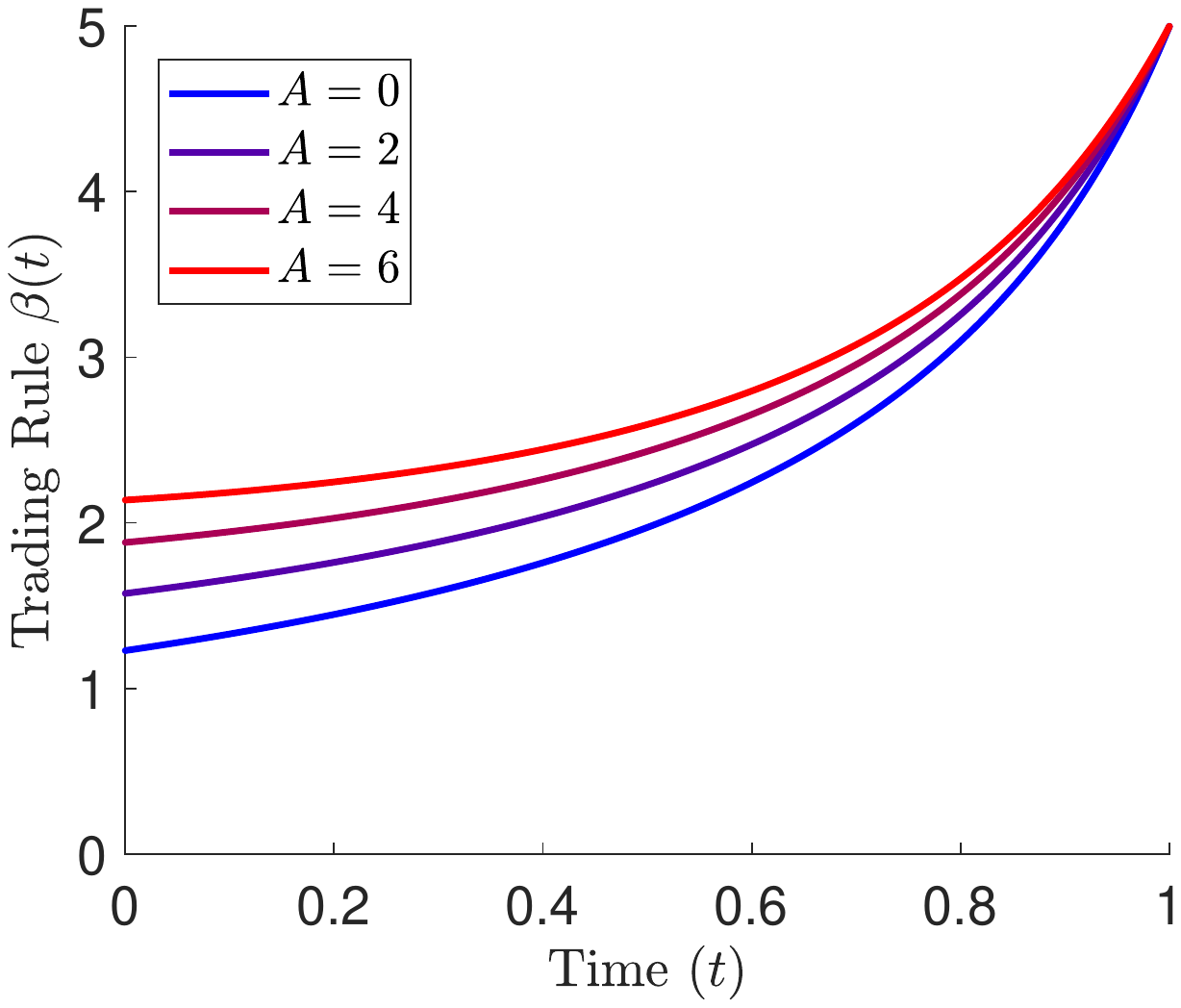}}\hspace{8mm}
		{\includegraphics[trim=140 240 140 240, scale=0.45]{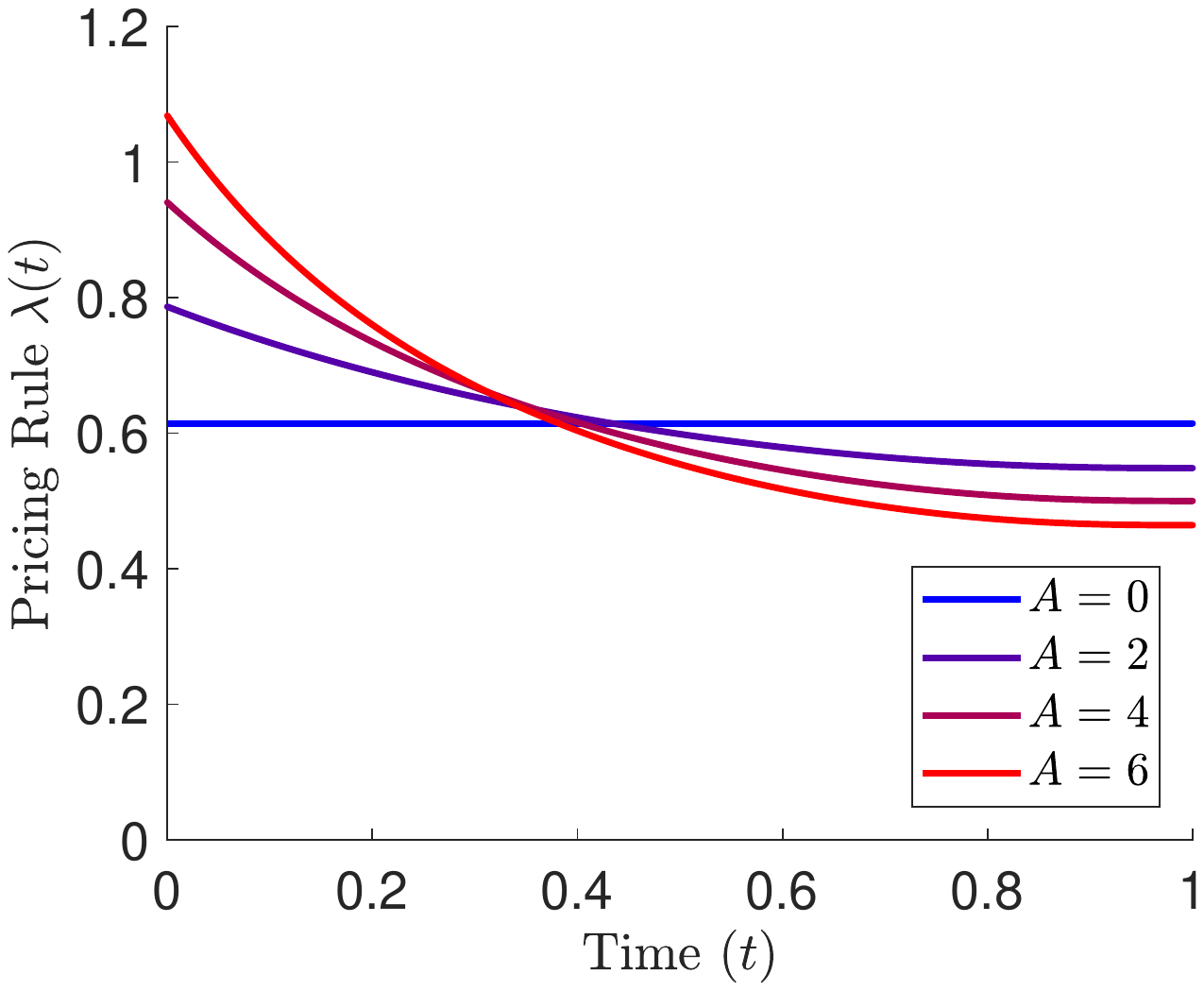}}\hspace{8mm}
		{\includegraphics[trim=140 240 140 240, scale=0.45]{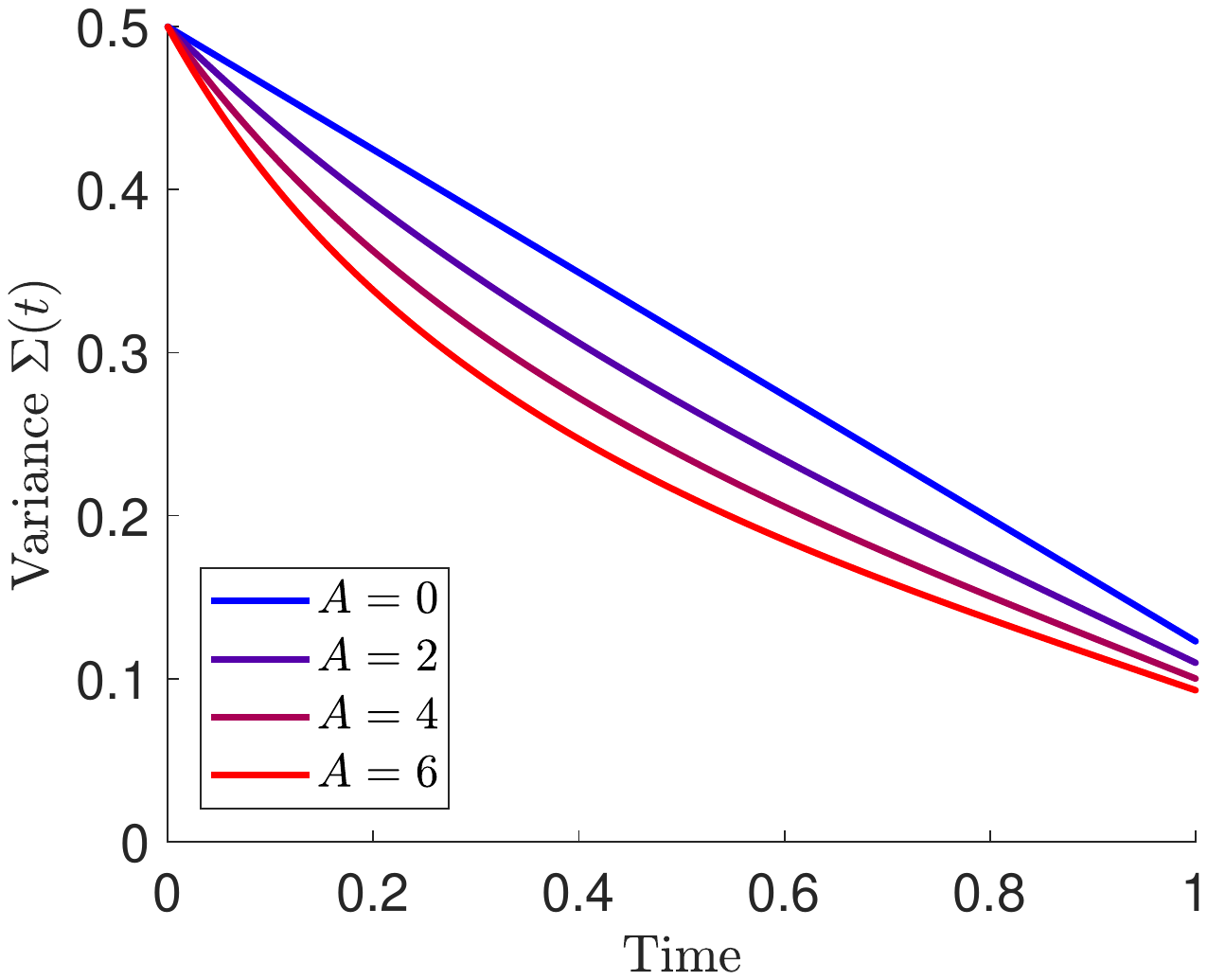}}
	\end{center}
	\vspace{-1em}
	\caption{Shown are the functions $\beta$, $\lambda$, and $\Sigma$ in equilibrium for various values of the risk-aversion parameter, $A$. Other parameter values are $c = 0.2$, $\sigma = 1$, and $\Sigma^v_0 = 0.5$. \label{fig:dependence_A}}
\end{figure}

In both Figure \ref{fig:dependence_c} and Figure \ref{fig:dependence_A} we note that the terminal conditional variance of the asset value is strictly positive. This is an immediate consequence of Lemma \ref{lem:genExistAndUnique} where it is shown that the solution to the FBODE \eqref{ode:FBODE} satisfies $x_1(t)>0$ for all $t\in[0,T]$. This means that the price process is not fully revealing of the asset's value as $t\rightarrow T$, unlike other frictionless models where equilibrium results in $\lim_{t\rightarrow T}P_t = v$ almost surely.

\subsection{Equilibrium Limits}
\label{sec:limitingEquilibria}

In this section, we study the limiting behaviour of the equilibrium rules $\beta$ and $\lambda$ of Theorem \ref{thm:continuousTimeEquilibrium} with respect to the transaction cost parameter $c$. The following proposition summarizes this behaviour as the transaction cost is taken towards its extremes.

\begin{proposition}[Limiting Transaction Cost Dependence]\label{prop:limits} For fixed $A\geq 0$,
	\begin{align}
		\lim_{c\rightarrow \infty} \beta(t; c) &= 0\,,\label{eqn:betacinf}\\
		\lim_{c\rightarrow \infty} \lambda(t; c) &= 0\,,\label{eqn:lambdacinf}\\
		\lim_{c\rightarrow 0} \beta(t; c)   
			&= 
			\left(
				\frac{A\Sigma_0^v}{2} +
				\sqrt{
					\left(
						\frac{A \Sigma_0^v}{2}
					\right)^2
					+ \Lambda_K^2
				}
			\right)
			\frac{
				1
			}{
				\Lambda_K^2 (T-t)
			}
			\,,\label{eq:betac0}\\
		\lim_{c\rightarrow 0} \lambda(t; c) 
			&= \frac{
				\Lambda_K^2
			}{
				\frac{A \Sigma_0^v}{2}\cdot \frac{2t - T}{T}
				+ \sqrt{
					\left(
						\frac{A \Sigma_0^v}{2}
					\right)^2
					+ \Lambda_K^2
				}
			}
				 \,.\label{eq:lambdac0}
	\end{align}
	The limits \eqref{eqn:betacinf} and \eqref{eqn:lambdacinf} hold uniformly for $t\in[0,T]$, and the limits \eqref{eq:betac0} and \eqref{eq:lambdac0} hold uniformly on any compact subinterval of $[0,T)$.
\end{proposition}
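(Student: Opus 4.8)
The plan is to read both limits directly off the equilibrium formulas \eqref{eqs:equilibriumBetaAndLambda}, in which $\beta(t;c)=\sigma^2/\big(2(c\sigma^2+x_2(t;c))\big)$ and $\lambda(t;c)=x_1(t;c)/\big(2(c\sigma^2+x_2(t;c))\big)$, where $x=(x_1,x_2)$ denotes the unique solution of the FBODE \eqref{ode:FBODE} for the given value of $c$. Two elementary facts from Lemma \ref{lem:genExistAndUnique} will carry most of the weight: first, $x_2(t;c)>0$ on $[0,T)$; and second, since $F_1\le 0$ in \eqref{eq:F}, the map $t\mapsto x_1(t;c)$ is non-increasing, so that $0<x_1(t;c)\le x_1(0;c)=\Sigma_0^v$ for every $c>0$ and every $t$. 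Everything then reduces to controlling the common denominator $c\sigma^2+x_2(t;c)$.

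For \eqref{eqn:betacinf} and \eqref{eqn:lambdacinf} I would simply use $x_2(t;c)>0$ and $x_1(t;c)\le\Sigma_0^v$ to bound
\[
0<\beta(t;c)\le\frac{\sigma^2}{2c\sigma^2}=\frac{1}{2c},\qquad
0<\lambda(t;c)\le\frac{\Sigma_0^v}{2c\sigma^2},
\]
for all $t\in[0,T]$. Both upper bounds are independent of $t$ and vanish as $c\to\infty$, which yields the two limits together with their uniformity on $[0,T]$.

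For \eqref{eq:betac0} and \eqref{eq:lambdac0} I would first dispose of the risk-neutral case $A=0$ directly: there the equilibrium rules are given in closed form by \eqref{eqs:riskNeutralEquilibriumBetaAndLambda}, with $\lambda\to\Lambda_K$ as $c\to0$ from \eqref{eq:riskNeutralEquilibriumLambda}, and substituting into \eqref{eq:riskNeutralEquilibriumBeta} gives $\beta(t;c)\to 1/(\Lambda_K(T-t))$; since these are continuous in $c$ down to $c=0$ on any $[0,T-\epsilon]$, convergence is uniform on compact subintervals of $[0,T)$, and the expressions agree with \eqref{eq:betac0} and \eqref{eq:lambdac0} at $A=0$. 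For general $A\ge0$ the $c=0$ problem still has the explicit solution \eqref{eq:odeCZeroSol}, so it suffices to establish that $x(\cdot;c)\to x(\cdot;0)$ as $c\to0$, uniformly on $[0,T-\epsilon]$ for each $\epsilon\in(0,T)$: inserting this into \eqref{eq:equilibriumBeta} and \eqref{eq:equilibriumLambda} and passing $c\sigma^2\to0$ produces $\beta(t;0)=\sigma^2/(2x_2(t;0))$ and $\lambda(t;0)=x_1(t;0)/(2x_2(t;0))$, and a direct substitution (using $S^2=(A\Sigma_0^v/2)^2+\Lambda_K^2$) identifies these with the stated limits.

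The continuous-dependence step is the crux and the only place I expect real difficulty. On a fixed interval $[0,T-\epsilon]$ the $c=0$ trajectory remains in a compact set on which $x_2(t;0)\ge\delta>0$, so in a tube around it the field $F$ of \eqref{eq:F} is uniformly Lipschitz in $x$ and depends continuously on $c$ (the $c$-dependence enters only through the $c\sigma^2$ terms, and the denominator is bounded below by $\delta^2$); a Gronwall estimate would then upgrade convergence of the data at $t=0$ into uniform convergence on $[0,T-\epsilon]$. Since $x_1(0;c)=\Sigma_0^v$ is fixed, the whole matter reduces to proving $x_2(0;c)\to x_2(0;0)$, which I would obtain from a shooting reformulation of the FBODE together with the uniqueness supplied by Lemma \ref{lem:genExistAndUnique}, by arguing that the shooting value depends continuously on $c$. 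The obstruction to uniformity all the way to $T$, and the reason the statement claims uniform convergence only on compact subintervals of $[0,T)$, is that at $c=0$ the field $F$ degenerates as $t\to T$, where $x_2(t;0)\to0$ forces the denominator $(c\sigma^2+x_2)^2\to0$; away from $T$ this degeneracy is absent and the Lipschitz–Gronwall argument applies.
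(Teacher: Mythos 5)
Your handling of the $c\to\infty$ limits coincides with the paper's argument (the paper bounds $\beta$ by $1/(2c)$ via monotonicity and $\beta(T)=1/(2c)$, and $\lambda$ via $\lambda=\beta\Sigma/\sigma^2$ with $\Sigma\le\Sigma_0^v$; your bounds through $x_2>0$ and $x_1\le\Sigma_0^v$ are the same estimates), and your outline for $c\to 0$ --- continuity of the FBODE solution in $c$ down to $c=0$, then substitution of the explicit solution \eqref{eq:odeCZeroSol} --- is also the paper's route, including the observation that the degeneracy at $(t,c)=(T,0)$ is what restricts uniformity to compact subintervals of $[0,T)$. The genuine gap is precisely the step you defer: continuity at $c=0$ of the shooting value $x_2(0;c)$, equivalently of the constant $k$ selecting the solution. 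You say you would obtain it ``by arguing that the shooting value depends continuously on $c$,'' but that assertion is the whole content of the nontrivial part of the proof, and your Gronwall tube estimate has nothing to feed on without it, since $x_1(0;c)=\Sigma_0^v$ is fixed and all $c$-dependence of the trajectory enters through $x_2(0;c)$. The paper closes this using machinery already built in the proof of Lemma \ref{lem:genExistAndUnique}: writing $x_2=\rho(x_1)$ with $\rho$ in closed form \eqref{eqn:rhox1}, reducing to the scalar ODE \eqref{eqn:ODEx1} whose right-hand side $f_1$ is continuously differentiable in $(c,k)$ \emph{without degenerating as $c\to 0$ or $x_1(T)\to 0$} (a real advantage over your formulation, where the field \eqref{eq:F} blows up as $(x_2,c)\to(0,0)$), and then applying the implicit function theorem to the shooting equation \eqref{eqn:implicit}; the hypothesis $\partial g/\partial k\neq 0$ is exactly the monotonicity $\partial x_1(T;k)/\partial k<0$ computed in that Lemma's proof. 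This yields $k=k(c)$ continuously differentiable through $c=0$, hence joint continuity of $(\beta,\lambda)$ in $(t,c)$ away from $(T,0)$ and uniform continuity on compacts.

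If you prefer to complete your version by compactness and uniqueness rather than the implicit function theorem, two points need care. First, at $c=0$ the FBODE \eqref{ode:FBODE} has \emph{two} solutions, corresponding to the two roots in \eqref{eqn:kpm}, and Lemma \ref{lem:genExistAndUnique} gives uniqueness only after imposing $x_2>0$ on $[0,T)$; so after extracting a subsequential limit of $x(\cdot;c_n)$ you must rule out the negative-$k$ branch --- immediate here, since $x_2(t;c_n)>0$ forces the limit to be nonnegative while that branch has $x_2<0$ on $[0,T)$, but it must be said. Second, equicontinuity of the family on $[0,T-\epsilon]$ uniformly in small $c$ requires a lower bound on $c\sigma^2+x_2(t;c)$ there, which in your formulation presupposes the very convergence you are proving; this circularity is resolvable by a continuation argument keeping perturbed trajectories inside the tube, but again it is a step, not a given. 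One last caution for your ``direct substitution'' finish: as printed, \eqref{eq:odeCZeroSol} is inconsistent with $x_1(0)=\Sigma_0^v$ (it gives $x_1(0)=\sigma^2 T$) unless the factor $\Lambda_K^2$ in the numerator of $x_1$ is read as $\Lambda_K^4$; with that reading the substitution into $\beta=\sigma^2/(2x_2)$ and $\lambda=x_1/(2x_2)$ reproduces \eqref{eq:betac0} and \eqref{eq:lambdac0} exactly, whereas with the formula as displayed your $\lambda$ limit would be off by a factor of $\Lambda_K^2$.
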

\begin{proof}
	For a proof see Section \ref{proof:limits} in the appendix.
\end{proof}

The two limits in \eqref{eqn:betacinf} and \eqref{eqn:lambdacinf} indicate that for sufficiently large transaction cost, the equilibrium essentially consists of the insider doing nothing and therefore there being no informational content to order flow (as it would be comprised of only noise trades). This result is expected given that any profits earned by the insider would be more than canceled out by the losses suffered from significant costs.

The two limits in \eqref{eq:betac0} and \eqref{eq:lambdac0} correspond to the equilibrium trading rule and pricing rule of in a setting where there is no transaction cost as in \cite{kyle1985continuous}, \cite{back1992insider}, and \cite{baruch2002insider}. This is significant because in our model the existence of a feedback form for the trading strategy relies on strict positivity of the transaction cost parameter. Typically in this style of model of asymmetric information, the HJB approach does not result in a feedback form for the optimal trading strategy. Proposition \ref{prop:limits} shows that in this case the correct equilibrium trading and pricing rules for a frictionless model can be obtained by adding an appropriate friction and then taking a limit as this friction vanishes. With a feedback form of the insider's trading strategy, some of the analysis becomes more straightforward, so this limiting technique could be applied in other settings of asymmetric information where the equilibrium rules are not as straightforward to classify.

An important step in obtaining this result is in identifying appropriate processes which will not have a discontinuity at $c=0$. When considering a vanishing friction term, some quantities can become discontinuous at $t=T$. In our model, this will happen with the function $h$ which appears in the insider's value function \eqref{eq:valueFunction}. To illustrate, in our model with $c>0$ we must have $h(T)=0$. But in frictionless equilibrium models it is typically the case that $\lim_{t\rightarrow T}h(t) > 0$ (in \cite{kyle1985continuous} $h$ is a positive constant). Thus it is impossible to have a well behaved limit as $c\rightarrow 0$. However, our model also results in $\Sigma(T)>0$ when $c>0$, which means that the product $\Sigma(t)\,h(t)>0$ for all $t\in[0,T)$ and $\Sigma(T)\,h(T)=0$. But these relations also hold in frictionless models, and so one of the appropriate functions to use in the classification of equilibrium is the product $\Sigma(t)\,h(t)$, which is represented by $x_2(t)$ in Lemma \ref{lem:genExistAndUnique} and Theorem \ref{thm:continuousTimeEquilibrium}.

\section{Conclusion}
\label{sec:conclusion}

In this paper, we extend the results of \cite{kyle1985continuous}, \cite{holden_subrahmanyam_1994} and \cite{baruch2002insider} by adding friction to the market in the form of a transaction cost which is linear in the insider's order size. We consider an exponentially risk-averse insider, allowing for the possibility of risk-neutrality. We begin by modeling a single-auction exchange and providing the corresponding unique market equilibrium in the form of an algebraic equation. We then demonstrate the effect of both the transaction cost and risk-aversion on the equilibrium by asymptotically expanding the algebraic equation about the frictionless risk-neutral equilibrium for small transaction cost and small risk-aversion. This procedure results in a single dimensionless quantity which determines an accurate expansion of the trading strategy to second order and the pricing rule to third order.

We then formulate an analogous market model in continuous-time. The mathematical machinery needed for the presentation of a linear market equilibrium in continuous-time is more sophisticated than that of the single-auction equilibrium. We develop an optimal trading strategy for an exponentially risk-averse insider by explicitly solving the HJB equation associated with optimization problem \eqref{eq:generalOptimalUtility}. We then give the filtering equations which provide an efficient price process for a fixed insider strategy. The main result is Theorem \ref{thm:continuousTimeEquilibrium}, which gives a linear, continuous-time market equilibrium in terms of the solution to the FBODE \eqref{ode:FBODE}, for which we show there exists a unique solution.

There are several qualitative similarities between the equilibria of the frictionless case and ours. First, linear equilibria exist in both cases and are given in terms of the solutions to differential equations. Additionally, the equilibrium insider trading rule $\beta$ is increasing in time, and the equilibrium market maker's pricing rule $\lambda$ is constant in the risk-neutral setting and decreasing in time in the risk-averse setting. However our model also has some properties which are not typical in many models of asymmetric information. The market maker's conditional variance $\Sigma$ has the property $\Sigma(T) > 0$, as distinct from the frictionless case in which $\Sigma(T) = 0$ meaning that the true value of the asset $v$ is not reveled to the market maker by time $T$.

The effects of varying model parameters is investigated by numerically computing the equilibrium strategies and by also explicitly computing some limits of equilibrium processes $\beta$ and $\lambda$ as transaction cost parameter $c$ tends to the values zero and infinity. We show that $\beta$ and $\lambda$ converge to the frictionless strategies of \cite{baruch2002insider} as $c \to 0$ and converge to zero as $c \to \infty$. Since the addition of friction to the continuous time model provides a feedback control, and the equilibrium processes $\beta$ and $\lambda$ converge to their frictionless counterparts when $c\rightarrow 0$, this could be used as a technique to study other frictionless models of asymmetry in which a feedback control is not directly obtained from the HJB approach.

\section{Appendix A}

\subsection{Proof of Theorem \ref{thm:singleAuctionEquil}} \label{sec:pf_thm:singleAucionEquil}

%\begin{proof}
	First we show that $r$ has a unique positive root. Define the function
	\begin{align}
	s(\lambda) &= 2\, ( \lambda + c )  + A\, \sigma^2\, \lambda^2\,. \label{eq:secondOrderCondition}
	\end{align}
	We will see later that $s(\lambda) > 0$ is the second order condition required for optimality of the trading strategy. The polynomial $r$ can be written as
	\begin{align}
	r(\lambda) &= \lambda\,(s^2(\lambda) + 4\, \lambda_K^2)- 4\, \lambda_K^2\, s(\lambda)\,, \label{eq:rInTermsOfs}
	\end{align}
	and we note that $r(\lambda) = 0$ if and only if 
	\begin{align}
	\lambda\,(s^2(\lambda) + 4\, \lambda_K^2 ) &= 4 \,\lambda_K^2\, s(\lambda)\,. \label{eq:alternativeLambdaEquation}
	\end{align}
	The left and right sides of \eqref{eq:alternativeLambdaEquation} are both polynomials in $\lambda$ with positive coefficients, and therefore are both strictly increasing functions of $\lambda \geq 0$. Furthermore, for any $A \geq 0$ the degree of the left hand side polynomial is strictly greater than the degree of the right hand side polynomial, and when $\lambda = 0$ the left hand side of \eqref{eq:alternativeLambdaEquation} is zero and the right hand side is positive. Therefore, \eqref{eq:alternativeLambdaEquation} has exactly one positive solution, and hence $r$ has exactly one positive root.
	
	The rest of the proof is divided into the two cases $A = 0$ and $A > 0$. Some related expressions are different between the two cases, but the structure of the proof in both cases is identical. First, we suppose that the insider's trading strategy is a linear function of $v$ and show the efficient pricing rule is linear. Then, we suppose the market maker's pricing rule is linear and show that the insider's optimal trading strategy is linear. Matching coefficients gives the result.
	\\
	
	\noindent
	\underline{Case $A = 0$}:
	First, we suppose that the market maker chooses $p$ as a linear function of $\Delta y$. Namely, we let $P(\Delta y) = \mu + \lambda\, \Delta y$, where $\mu$ and $\lambda$ are constants. When the market maker follows the pricing rule $P$, the insider's transaction price is given by
	\begin{align}
	\widehat p &= \mu + \lambda\, \Delta y + c\, \Delta x = \mu + (\lambda + c)\, \Delta x + \lambda \,\Delta z\,. \label{eq:linearTransactionPrice}
	\end{align}
	Thus, 
	\begin{align}
	\mathbb{E}[U(w)\,|\,v] = \mathbb{E}[(v - \widehat p)\,\Delta x|\,v]	&= \mathbb{E}[(v - \mu)\,\Delta x - (\lambda + c )\,\Delta x^2 - \lambda\, \Delta x\, \Delta z\,|\, v] \\
	&= (v - \mu )\,\Delta x - (\lambda + c )\,\Delta x^2\,.	\label{eq:ewProof}
	\end{align}
	The second order condition for optimality is $\lambda+c>0$, which is equivalent to $s(\lambda)>0$. Assuming this is satisfied, the choice of
	\begin{align}
	\Delta x &\equiv X(v) =	\frac{v - \mu}{2\, (\lambda + c)}\,, \label{eq:deltaxProofA0}
	\end{align}
	maximizes \eqref{eq:ewProof} for any $v$. We note that for linear pricing rules $P$ linear trading strategies $X$ are optimal even if we allow $X$ to be a nonlinear function.
	
	Now we assume that the insider chooses $\Delta x$ to be a linear function $X$ of the \textit{ex-post} price $v$. Specifically, the insider chooses $X (v) = \alpha + \beta\, v$, where $\alpha$ and $\beta$ are constants. Then, by the projection theorem for normal random variables we have
	\begin{align}
	\mathbb{E}[ v \,|\, \Delta y]	= \mathbb{E}[ v \,|\, \alpha + \beta\, v + \Delta z ] &= \mathbb{E}[v] + \frac{ \mathbb{E}[ (v - \mathbb{E}[v] )(\Delta y - \mathbb{E}[\Delta y] ) ] }{ \mathbb{E}[( \Delta y - \mathbb{E}[\Delta y]  )^2 ] } ( \Delta y - \mathbb{E}[\Delta y] ) \nonumber\\
	&= v_0 + \frac{	4\, \lambda_K^2\, \beta }{ 1 + 4\, \lambda_K^2\, \beta^2 } (\Delta y - \alpha - \beta\, v_0 ) \nonumber\\
	&= \frac{ v_0 - 4 \,\lambda_K^2\, \alpha\, \beta }{ 1 + 4 \,\lambda_K^2 \,\beta^2 } + \frac{ 4\, \lambda_K^2\, \beta }{ 1 + 4\, \lambda_K^2\, \beta^2 } \,\Delta y\,.\label{eq:vCondExpectationDiscTime}
	\end{align}
	
	By examining \eqref{eq:deltaxProofA0} and \eqref{eq:vCondExpectationDiscTime}, we see that for $(P,X) \equiv (P(\Delta y), X(v)) = ( \mu + \lambda\, \Delta y, \alpha + \beta \,v)$ to be an equilibrium, it must be that
	\begin{subequations}
		\begin{align}
		\alpha 	&= - \beta \,\mu, & \mu &= \frac{ v_0 - 4\, \lambda_K^2\, \alpha \,\beta }{ 1 + 4\, \lambda_K^2 \,\beta^2 }\,, \label{eqs:alphaMuProofA0}\\
		\beta &= \frac{ 1 }{ 2\, (\lambda + c) }, & \lambda &= \frac{ 4\, \lambda_K^2 \,\beta }{ 1 + 4 \,\lambda_K^2 \,\beta^2 }\,, \label{eqs:betaLambdaProofA0}
		\end{align}
	\end{subequations}
	subject to the constraint $s(\lambda) > 0$. 
	
	Immediately, we get from \eqref{eqs:alphaMuProofA0} that in equilibrium $\mu = v_0$ and $\alpha = - \beta\, v_0$, which gives the insider's trading strategy $X(v) = \beta\, ( v - v_0)$. Recall that the insider's second order condition is satisfied if $s(\lambda) > 0$. Inserting $\beta$ into the expression for $\lambda$ in \eqref{eqs:betaLambdaProofA0}, we see that $\lambda$ satisfies \eqref{eq:alternativeLambdaEquation}. If $\lambda \leq 0$ then from \eqref{eq:alternativeLambdaEquation} we see that $s(\lambda)\leq 0$, contradicting optimality of $\Delta x$. Therefore, $\lambda$ is the unique positive root of the polynomial $r$.
	\\
	
	\noindent
	\underline{Case $A > 0$}:
	First, suppose that the market maker chooses $p$ as a linear function of $\Delta y$ so that $P(\Delta y) = \mu + \lambda \Delta y$, where $\mu$ and $\lambda$ are constants. When the market maker follows the pricing strategy $P$, then the insider's transaction price $\widehat p$ is given by \eqref{eq:linearTransactionPrice}. Thus, 
	\begin{align}
	\mathbb{E}[U(w)| v] &= \mathbb{E}[ -\exp\{- A ( v- \widehat p ) \Delta x \}| v] \nonumber\\
	&= - \exp \{ -A (v - \mu )\Delta x + A (\lambda + c )\Delta x^2 + \tfrac 12 A^2 \lambda^2 \sigma^2 \Delta x^2\}\,. \label{eq:euwProof} 
	\end{align}
	The second order condition for optimality is $2(\lambda+c)+A\sigma^2\lambda^2 > 0$ which is equivalent to $s(\lambda) > 0$. Assuming this is satisfied, the choice of 
	\begin{align}
	\Delta x & \equiv X(v) = \frac{v - \mu}{ 2\, (\lambda + c) + A \,\sigma^2\, \lambda^2} \label{eq:deltaxProofAgt0}
	\end{align}
	maximizes \eqref{eq:euwProof} for any $v$. We note that for linear pricing rules $P$ linear trading strategies $X$ are optimal even if we allow $X$ to be a nonlinear function.
	
	Now, suppose that the insider chooses the linear trading strategy $X(v) = \alpha + \beta\, v$. Then $\mathbb{E}[v \,|\,\Delta y]$ is given by \eqref{eq:vCondExpectationDiscTime}.
	
	By examining \eqref{eq:vCondExpectationDiscTime} and \eqref{eq:deltaxProofAgt0}, we see that for $(P,X) \equiv (P(\Delta y), X(v)) = ( \mu + \lambda \,\Delta y, \alpha + \beta\, v)$ to be an equilibrium we must have
	\begin{subequations}
		\begin{align}
		\alpha &= - \beta \,\mu, & \mu &= \frac{ v_0 - 4 \,\lambda_K^2 \,\alpha \,\beta }{ 1 + 4 \,\lambda_K^2\, \beta^2 }\,, \label{eqs:alphaMuProofA1}\\
		\beta &= \frac{	1 }{ 2\, (\lambda + c) + A \,\sigma^2 \,\lambda^2 }\,, & \lambda &= \frac{ 4 \,\lambda_K^2 \,\beta }{ 1 + 4 \,\lambda_K^2 \,\beta^2 }\,, \label{eqs:betaLambdaProofA1}
		\end{align}
	\end{subequations}
	where $s(\lambda) > 0$. The rest of the proof is identical to the case $A = 0$. \qed
%\end{proof}

\subsection{Proof of Proposition \ref{prop:single_auction_dependence}} \label{sec:pf_prop:single_auction_dependence}

%\begin{proof}
	In the proof of Theorem \ref{thm:singleAuctionEquil}, we saw that when $A > 0$ the equilibrium $\beta$ and $\lambda$ satisfy \eqref{eqs:betaLambdaProofA1}. Inserting $\lambda$ into $\beta$ in \eqref{eqs:betaLambdaProofA1} we get that in equilibrium $\beta$ satisfies
	\begin{align}
	q(\beta) &:= 32\,c\, \lambda_K^4\,\beta^5 + 16\, \lambda_K^4\, \beta^4 + 16 \, \lambda_K^2 \,(A \,\lambda_K^2 \,\sigma ^2 + c)\,\beta ^3 + 2 \,c\, \beta - 1 = 0\,,\label{eq:betaPoly}
	\end{align}
	with $\beta >0$. We see that for fixed $x > 0$ we have $\partial q (x) / \partial c > 0$ and $\partial q (x) / \partial A > 0$. Since $q(x)$ is a strictly increasing function of $x$, we therefore must have $\partial \beta / \partial c < 0$ and $\partial \beta / \partial A < 0$. Taking derivatives of $\lambda$ in \eqref{eqs:betaLambdaProofA1} with respect to $c$ and $A$ yields 
	\begin{align}
		\frac{\partial \lambda}{\partial c} &= \frac{4\,\lambda_K^2\,(1-4\,\lambda_K^2\,\beta^2)}{(1+4\,\lambda_K^2\,\beta^2)^2} \, \frac{\partial\beta}{\partial c}\,,\\
		\frac{\partial \lambda}{\partial A} &= \frac{4\,\lambda_K^2\,(1-4\,\lambda_K^2\,\beta^2)}{(1+4\,\lambda_K^2\,\beta^2)^2} \, \frac{\partial\beta}{\partial A}\,.
	\end{align}
	Therefore,
	\begin{align}
		\frac{\partial \lambda}{\partial c} < 0\,, \frac{\partial \lambda}{\partial A} < 0 \quad \Longleftrightarrow \quad \beta < \frac{ 1 }{2\, \lambda_K} = \beta_K\,.
	\end{align}
	We have already shown that $\beta$ is strictly decreasing with respect to $c$ and $A$, and in particular we have $\beta \leq \beta_K$ with equality only when $c=A=0$. Thus, $\lambda$ is strictly decreasing with respect to both $c$ and $A$. \qed
%\end{proof}

\subsection{Proof of Proposition \ref{prop:single_auction_approx}} \label{sec:pf_prop:single_auction_approx}

%\begin{proof}
	The proof of this proposition relies on the result that the roots of a polynomial depend on the coefficients analytically in a neighbourhood of a given root (see \cite{brillinger1966analyticity}). With this in mind we write $\lambda$ and $\beta$ as a power series in the quantities $A$ and $c$ as follows:
	\begin{align}
		\lambda &= \sum_{i=0}^\infty \sum_{j=0}^\infty \tilde{\lambda}_{i,j} \,A^i\,c^j\,,\\
		\beta   &= \sum_{i=0}^\infty \sum_{j=0}^\infty \tilde{\beta}_{i,j} \,A^i\,c^j\,.\label{eqn:pf_beta_expand}
	\end{align}
	Our approximation corresponds to computing each $\tilde{\lambda}_{i,j}$ for $i+j\leq 3$ and each $\tilde{\beta}_{i,j}$ for $i+j\leq 2$, the remainder of the higher order terms being either $o(A^4+c^4)$ (for $\lambda$) or $o(A^3+c^3)$ (for $\beta$).
	
	We first substitute the expansion for $\lambda$ into the polynomial $r(x)$ given in \eqref{eq:singleAuctionLambdaPoly} and recall that $\lambda$ is the unique positive root of this polynomial. We then collect terms according to the powers of $A$ and $c$ and set each term equal to zero individually. This results in a system of $10$ equations\footnote{The equations are large and tedious, so are displayed in a subsequent appendix} which does not have a unique solution. However, inspection shows that the coupling of the equations is arranged in such a way that the first appearance of each quantity to be solved for is linear, with the exception of $\tilde{\lambda}_{0,0}$. Thus, given $\tilde{\lambda}_{0,0}$, every other $\tilde{\lambda}_{i,j}$ is solved for uniquely. Because we require $\lambda$ to be positive, we must use the solution which corresponds to a positive value of $\tilde{\lambda}_{0,0}$, of which there is only one (the other roots are zero and negative). The resulting solution is
	\begin{align*}
	\tilde{\lambda}_{00} &= \lambda_K\,, &	\tilde{\lambda}_{10} &= 0\,, & \tilde{\lambda}_{01} &= 0\,, &	\tilde{\lambda}_{20} &= -\frac{1}{8}\,\sigma^4\,\lambda_K^3\,, & \tilde{\lambda}_{11} &= -\frac{1}{2}\,\sigma^2\,\lambda_K\,, \\
	 \tilde{\lambda}_{02} &= -\frac{1}{2}\,\frac{1}{\lambda_K}\,, & \tilde{\lambda}_{30} &= \frac{1}{8}\,\sigma^6\,\lambda_K^4\,, & \tilde{\lambda}_{21} &=  \frac{3}{4}\,\sigma^4\,\lambda_K^2\,, & \tilde{\lambda}_{12} &= \frac{3}{2}\,\sigma^2 \,, & \tilde{\lambda}_{03} &= \frac{1}{\lambda_K^2}\,.\\
	\end{align*}
	By letting $\displaystyle\nu = \frac{1}{2}\,\lambda_K\,\sigma^2\,A + \frac{c}{\lambda_K}$ we are able to write the truncated sum corresponding to the above coefficients and perform some elementary factoring which yields
	\begin{align*}
		\sum_{i+j\leq 3} \tilde{\lambda}_{i,j}\, A^i\, c^j &= \lambda_K\,\biggl(1 - \frac{1}{2}\,\nu^2 + \nu^3\biggr)\,.
	\end{align*}
	This is the desired form as given in the statement of the Proposition.
	
	Solving for $\tilde{\beta}_{i,j}$ for $i+j\leq 2$ is performed through essentially the same process, except the expansion \eqref{eqn:pf_beta_expand} is substituted into the polynomial $q(x)$ defined in \eqref{eq:betaPoly}. By collecting powers of $A$ and $c$ and setting each to zero individually, this once again yields a system of $6$ equations\footnote{Similarly, shown in a subsequent appendix}. The resulting system of equations does not have a unique solution, but given $\tilde\beta_{0,0}$ the remainder of the equations to be solved are linear. There is only one value of $\tilde\beta_{0,0}$ which ensures $\beta$ is positive (the other roots being negative or imaginary). Solving the remaining linear equations yields
	\begin{align*}
	\tilde\beta_{00} &= \frac{1}{2\,\lambda_K}\,, &  \tilde\beta_{10} &= -\frac{1}{4}\,\sigma^2\,, & \tilde\beta_{01} &= -\frac{1}{2\,\lambda_K^2}\,,\\
	\tilde\beta_{20} &= \frac{3}{16}\,\sigma^4\,\lambda_K\,, & \tilde\beta_{11} &= \frac{3}{4}\,\frac{\sigma^2}{\lambda_K}\,, & \tilde\beta_{02} &= \frac{3}{4\,\lambda_K^3}\,.
	\end{align*}
	The truncated sum corresponding to these coefficients can again be factored to give
	\begin{align*}
		\sum_{i+j\leq 2} \tilde{\beta}_{i,j}\, A^i\, c^j &= \frac{1}{2\,\lambda_K}\,\biggl(1 - \nu + \frac{3}{2}\,\nu^2\biggr)\,.
	\end{align*}
	
	This is the desired form as given in the statement of the Proposition. \qed
%\end{proof}

\subsection{Proof of Lemma \ref{lem:optimalTradingStrategy}} \label{proof:optimalTradingStrategy}

%\begin{proof}
	We consider the cases $A = 0$ and $A > 0$ separately. \\
	
	\noindent
	\underline{Case $A = 0$}: When $A = 0$,  $U(w) = w$, and the value function \eqref{eq:generalValueFunction} becomes 
	\begin{align}	
		H(t,P) &= \sup_{\theta \in \mathcal A} \mathbb{E}\biggl[ \int_t^T (v -  P_s - c\, \theta_s )\, \theta_s \, ds	\, \biggl|\, 	\mathcal{F}_t^I	\biggr] \,.\label{eq:riskNeutralValueFunctionDefinition}
	\end{align} 
	Associated with this stochastic control problem is the HJB partial differential equation 
	\begin{align}
		\partial_t H + \sup_\theta \biggl\{ \frac {1}{2}\, \sigma^2\, \lambda^2(t)\, \partial_{PP} H 	+ \theta\, \lambda(t)\, \partial_{P} H +  ( v - P - c \,\theta)\, \theta \biggr\} &= 0\,, & H(T,\cdot) &= 0\,.\label{eq:riskNeutralHJB}
	\end{align}
	It can be checked by direct substitution that the solution of this equation is given by \eqref{eq:valueFunction} when $h$ satisfies the ODE \eqref{ode:riccatiODE}. The supremum in \eqref{eq:riskNeutralHJB} is achieved at
	\begin{align}
		\theta^*( t, P) &= \beta(t)\,( v - P), & \beta(t) &= \frac{ 1 - 2 \,\lambda(t)\, h(t) }{ 2\, c }\,.\label{eq:optimalThetaProof}
	\end{align}
	All that remains to be shown is that this feedback form of $\theta^*$ yields an admissible trading strategy. Optimality of $\theta^*$ then follows from a standard verification argument (see \cite{pham2009continuous}). To this end, define an auxilliary process $Q = (Q_t)_{0 \leq t \leq T}$ by
	\begin{align*}
		Q_t &= P_t - v\,.
	\end{align*}
	Then under the control $\theta^*$, we have the dynamics
	\begin{align*}
		dQ_t &= -\lambda(t)\,\beta(t)\,Q_t\,dt + \lambda(t)\,\sigma\,dZ_t\,, & Q_0 &= v_0 - v\,.
	\end{align*}
	This stochastic differential equation is linear and therefore has a unique strong solution. Further, since $v_0-v$ is Gaussian, the resulting solution is a Gaussian process (see \cite{karatzas2012brownian} Section 5.6). This gives
	\begin{align*}
		\mathbb{E}\biggl[\int_0^T (\theta_s^*)^2 \, ds\biggr] &= \mathbb{E}\biggl[\int_0^T \beta^2(s)Q_s^2 \, ds\biggr] < \infty\,.
	\end{align*}
	In addition, $Q$ has continuous paths and thus the trading strategy is predictable, and therefore admissible.

	\noindent
	\underline{Case $A > 0$}: When $A > 0$, $U(w) = - \exp(-Aw)$, and the value function \eqref{eq:generalValueFunction} becomes
	\begin{align}
		H(t, P)	&= \sup_{\theta \in \mathcal A} \mathbb{E}\biggl[ - \exp \left\{ - A \int_t^T (   v - P_s -  c \,\theta_s )\, \theta_s \, ds \right\} \,\biggl|\, \mathcal{F}_t^I \biggr]\,.\label{eq:riskAverseValueFunctionDefinition}
	\end{align}
	This stochastic control problem has the associated HJB partial differential equation
	\begin{align}
		\partial_t H + \sup_\theta \biggl\{\frac {1}{2}\, \sigma^2\, \lambda^2(t)\, \partial_{PP} H + \theta \,\lambda(t)\, \partial_{P} H - A \,( v - P - c\, \theta)\, \theta\, H	\biggr\} &= 0, & H(T,\cdot) &= -1\,.\label{eq:riskAverseHJB}
	\end{align}
	Once again, it can be checked by direct substitution that this equation has solution given by \eqref{eq:valueFunction}. The resulting feedback form of the control is again linear with respect to $v-P$, and thus the remainder of the proof is identical to the risk-neutral case. \qed
%\end{proof}

\subsection{Proof of Lemma \ref{lem:genExistAndUnique}} \label{proof:genExistAndUnique}
In this section, we present a proof of Lemma \ref{lem:genExistAndUnique} for the case $A>0$. The proof for $A=0$ is essentially the same but more straightforward, and the solution given by \eqref{eq:odeAlphaZeroSol} can be checked by direct substitution.

\begin{proof}[Proof of Lemma \ref{lem:genExistAndUnique}]
	We first consider $c>0$. Inspection of equation \eqref{ode:FBODE} shows that any solution must have $x_1$ a decreasing function, so we look for a solution in which $x_2(t) = \rho(x_1(t))$ for some function $\rho$. This gives
	\begin{align}
		\frac{d\rho}{d x_1} &= \frac{dx_2 /dt }{dx_1 /dt}\nonumber\\
		&= \frac{c\,\sigma^2}{x_1} + \frac{\rho}{x_1} - \frac{2\,A\,\rho^2}{x_1}\,.\label{eqn:rhoode}
	\end{align}
	This ODE has general solution
	\begin{align}
		\rho(x_1) &= \frac{(\gamma+1)\,x_1^\gamma - (\gamma-1)\,k}{4\,A\,(x_1^\gamma + k)}\,,\label{eqn:rhox1}\\
		\gamma &= \sqrt{1 + 8\,A\,c\,\sigma^2}\,,\label{eqn:gamma}
	\end{align}
	for arbitrary $k\in\mathbb{R}$. The value of $k$ must be chosen to match the boundary condition $x_2(T) = \rho(x_1(T)) = 0$. We can immediately rule out $k=0$ because then $x_2(t) = \rho(x_1(t))$ becomes a non-zero constant contradicting the boundary condition $x_2(T) = 0$. Therefore, we search for non-zero $k$ such that $g(k) = 0$, where
	\begin{align*}
		g(k) &= (\gamma+1)\,x_1^\gamma(T;k) - (\gamma-1)\,k\,,
	\end{align*}
	and where by substituting \eqref{eqn:rhox1} into \eqref{eq:F} we have that $x_1$ satisfies
	\begin{align}
		\frac{dx_1}{dt}(t;k) &= f_1(x_1(t;k);k)\,, & x_1(0;k) &= \Sigma_0^v\,,\label{eqn:ODEx1}\\
		f_1(x_1;k) &= -\frac{ 16\, A^2\, \sigma^2\, x_1^2\, \left(x_1^\gamma  + k \right)^2 }{ \biggl((\gamma +1)^2\, x_1^\gamma  + (\gamma -1)^2\, k\biggr)^2 }\,, 
	\end{align}
	where we have made the dependence of $x_1$ on the parameter $k$ explicit. Inspection of \eqref{eqn:ODEx1} shows that $x_1(t;k)$ must be positive for all $t\in[0,T]$, and so we must have $k>0$ in order to have $g(k) = 0$. We now show that there exists exactly one value of $k>0$ which provides this solution.

	To this end we show that for $k>0$, $x_1(T;k)$ is decreasing with respect to $k$. As $f_1$ is continuously differentiable with respect to $k$, we have that $x_1(T;k)$ is as well (see \cite{hartman_ordinary_2002}). By setting $z(t;k) = \partial x_1(t;k)/\partial k$, we have
	\begin{align}
		\label{ode:z}
		\frac{\partial z(t;k)}{\partial t} 
			&= \frac{
				\partial f_1 ( x_1(t;k);k)
			}{
				\partial x_1
			} z(t;k) 
			+ \frac{
				\partial f_1 (x_1(t;k);k)
			}{
				\partial k
			}, & 
		z(0;k)
			&= 0,
	\end{align}
	where
	\begin{align}
		\frac{
				\partial f_1 ( x_1; k)
			}{
				\partial x_1
			}
			&=
			-\frac{
				32\, A^2\, \sigma ^2\, x_1\, 
				\left(k+x_1^{\gamma }\right)\, 
				\biggl((\gamma +1)\, x_1^{\gamma } - (\gamma -1 )\,k \biggr)^2
			}{
				\biggl((\gamma +1)^2\, x_1^{\gamma } + (\gamma -1)^2\, k\biggr)^3
			}\, , \\
		\label{eq:df1dk}
		\frac{
				\partial f_1 ( x_1; k)
			}{
				\partial k
			}	
			&= 
			-\frac{
				128\, A^2\, \gamma\,  \sigma ^2\, x_1^{\gamma +2}
				(x_1^{\gamma } + k )
			}{
				\biggl((\gamma +1)^2\, x_1^{\gamma } + (\gamma -1)^2\, k\biggr)^3
			}\,.
	\end{align}
	Since $x_1(t;k)>0$ for $t\in[0,T]$ and $k>0$, we have
	\begin{align}
	\frac{
			\partial f_1 ( x_1(t; k); ,k)
		}{
			\partial x_1
		}
		& \leq 0, & 
	\frac{
			\partial f_1 ( x_1(t; k); k)
		}{
			\partial k
		}
		& < 0\,.
	\end{align} 
	Thus, by examining \eqref{ode:z}, we see that $ \partial x_1(t;k)/\partial k < 0$ for all $t\in (0, T]$ and $k>0$. This establishes that $x(T;k)$ is decreasing with respect to $k$, and therefore any solution to $g(k)=0$ must be unique. Let $k_r$ and $k_l$ be given by
	\begin{align*}
		k_r &= \frac{\gamma+1}{\gamma-1}\,(\Sigma_0^v)^\gamma\,, & k_l &= \frac{\gamma+1}{\gamma-1}x^\gamma_1(T;k_r)\,.
	\end{align*}
	Then we see $0<k_l<k_r$ and $g(k_l)>0>g(k_r)$. Thus we have a unique $k>0$ such that $x_2(T;k) = 0$. This establishes existence and uniqueness of the solution to \eqref{ode:FBODE} along with the property that $x_1(t)>0$ for all $t\in[0,T]$. The fact that $x_2(t)>0$ for all $t\in[0,T)$ then follows immediately from \eqref{eqn:rhox1} together with the boundary condition $x_2(T)=0$ and that $x_1(t)$ is decreasing.

	We now consider $c = 0$ and proceed similarly by looking for a function $\rho$ such that $x_2(t) = \rho(x_1(t))$. In this case the ODE satisfied by $\rho$ is
	\begin{align*}
		\frac{d\rho}{d x_1} &= \frac{\rho}{x_1} - \frac{2\,A\,\rho^2}{x_1}\,,
	\end{align*}
	which has solution
	\begin{align*}
		\rho(x_1) &= \frac{x_1}{2\,A\,(x_1+k)}\,,
	\end{align*}
	for arbitrary $k\in\mathbb{R}$. We immediately eliminate $k=0$ for the same reason as before. However, we cannot eliminate $k<0$ from consideration. The ODE for $x_1$ now takes the form
	\begin{align*}
		\frac{dx_1}{dt}(t;k) &= -A^2\,\sigma^2\,(x_1(t;k) + k)^2\,, & x(0;k) &= \Sigma_0^v\,,
	\end{align*}
	which has the unique solution
	\begin{align}
		x_1(t;k) &= \frac{\Sigma_0^v - k(\Sigma_0^v+k)A^2\sigma^2t}{1 + (\Sigma_0^v+k)A^2\sigma^2t}\,.\label{eqn:x1c0}
	\end{align}
	Enforcing the boundary condition $x_2(T;k) = 0$ yields two possible values of $k$:
	\begin{align}
		k &= -\frac{\Sigma_0^v}{2}\biggl(1 \pm \sqrt{1 + \frac{4}{A^2\,\sigma^2\, T\,\Sigma_0^2}}\biggr)\,.\label{eqn:kpm}
	\end{align}
	This results in one positive and one negative value of $k$, both of which provide solutions to the FBODE \eqref{ode:FBODE}. However, the negative value of $k$ results in $x_2(t)<0$ for all $t\in[0,T)$. By enforcing $x_2(t)>0$ for $t\in[0,T)$ we must discard the negative value of $k$ and are left with a unique solution. From \eqref{eqn:x1c0} we have immediately that $x_1(t;,k)>0$ for all $t\in[0,T)$ and $x_1(T) = 0$.
	\qed
\end{proof}

\subsection{Proof of Theorem \ref{thm:continuousTimeEquilibrium}} \label{proof:continuousTimeEquilibrium}

%\begin{proof}
	First we fix $(x_1, x_2)$ to be the unique solution to the FBODE \eqref{ode:FBODE} with $x_1(0) = \Sigma_0^v$. Suppose the price dynamics are given by
	\begin{align*}
		dP_t &= \lambda(t)\,dY_t\,, & P_t &= v_0\,,
	\end{align*}
	with $\lambda(t)$ given by
	\begin{align*}
		\lambda(t) &= \frac{\Sigma(t)}{2\,(c\,\sigma^2 + \Sigma(t)\,h(t))}\,,
	\end{align*}
	where $\Sigma(t) = x_1(t)$ and $h(t) = x_2(t)/x_1(t)$ as in the statement of the Theorem. A straightforward computation shows that
	\begin{align*}
		\frac{dh(t)}{dt} &= \frac{2\,A\,\sigma^2\,\Sigma^2(t)\,h^2(t) -c\,\sigma^4}{4\,(c\,\sigma^2 + \Sigma(t)\,h(t))^2}\,,
	\end{align*}
	and in addition, that
	\begin{align*}
		-\frac{(1-2\,A\,c\,\sigma^2)\,\lambda^2(t)}{c}\,h^2(t) + \frac{\lambda(t)}{c}\,h(t) - \frac{1}{4\,c} &= \frac{2\,A\,\sigma^2\,\Sigma^2(t)\,h^2(t) - c\,\sigma^4}{4\,(c\,\sigma^2 + \Sigma(t)\,h(t))^2}\,.
	\end{align*}
	Thus we have that $h$ satisfies the ODE \eqref{ode:riccatiODE}. Therefore, by Lemma \ref{lem:optimalTradingStrategy}, the optimal trading strategy is given by
	\begin{align*}
		\theta^*_t &= \beta(t)\,(v-P_t)\,dt\,,
	\end{align*}
	with
	\begin{align*}
		\beta(t) &= \frac{ 1 - 2\, \lambda(t)\,  h(t) }{ 2\, c }\,,
	\end{align*}
	and the insider's value function is given by \eqref{eq:valueFunction}.
	
	Now, again with $(x_1, x_2)$ the unique solution to the FBODE, suppose that the insider's trading strategy is given by
	\begin{align}
		\theta_t &= \beta(t)\,(v-P_t)\,dt\,,\label{eqn:theta_pf}
	\end{align}
	with
	\begin{align*}
		\beta(t) &= \frac{\sigma^2}{2\,(c\,\sigma^2 + \Sigma(t)\,h(t))}\,,
	\end{align*}
	where $\Sigma(t) = x_1(t)$ and $h(t) = x_2(t)/x_1(t)$. From \eqref{ode:FBODE} we see that $\Sigma$ satisfies
	\begin{align}
		\frac{d\Sigma(t)}{dt} &= -\frac{\sigma^2\,\Sigma^2(t)}{4\,(c\,\sigma^2 + \Sigma(t)\,h(t))^2} = -\sigma^2\,\lambda^2(t)\,.\label{eqn:dSigma_pf}
	\end{align}
	In addition we have by their definitions in the statement of the Theorem that
	\begin{align}
		\lambda(t) &= \frac{\beta(t)\,\Sigma(t)}{\sigma^2}\,.\label{eqn:lambda_pf}
	\end{align}
	By Lemma \ref{lem:filtering}, the relations in \eqref{eqn:theta_pf}, \eqref{eqn:dSigma_pf}, and \eqref{eqn:lambda_pf} imply that the efficient pricing rule is given by
	\begin{align*}
		dY_t &= \lambda(t)\,dY_t\,, & P_0 &= v_0\,,
	\end{align*}
	and that the function $\Sigma$ also yields the market maker's conditional variance:
	\begin{align*}
		\Sigma(t) &= \mathbb{E}[(v-P_t)^2|\mathcal{F}_t^M]\,.
	\end{align*}
	
	Finally, when $A = 0$, applying Lemma \ref{lem:genExistAndUnique} gives the closed form expressions
	\begin{align*}
		\beta(t) &= \frac{ 	1 }{ \lambda\, (T-t) + 2\, c	}\,, \\
		\lambda &= \sqrt{ \Lambda_K^2 + \frac{c^2}{T^2}} - \frac {c}{ T }\,.
	\end{align*}
	as desired.	\qed
%\end{proof}

\subsection{Proof of Proposition \ref{prop:compareStat}} \label{proof:compareStat}

As in the proof of Theorem \ref{thm:continuousTimeEquilibrium}, let $(x_1,x_2)$ be the unique solution to the FBODE \eqref{ode:FBODE}, and let $\Sigma_(t) = x_1(t)$ and $h(t) = x_2(t)/x_1(t)$. Then writing
\begin{align*}
	\beta(t) &= \frac{ \sigma^2 }{ 2 \,( c\, \sigma^2 + \Sigma(t)\,h(t) ) }\,, & \lambda(t) &= \frac{ \Sigma(t) }{ 2 \,( c\, \sigma^2 + \Sigma(t)\,h(t) ) }\,,
\end{align*}
we have that $\beta$ is increasing if and only if $x_2$ is decreasing. Recalling that
\begin{align*}
	\frac{x_2(t)}{dt} &= -\frac{\sigma^2\,x_1\,(c\,\sigma^2 + x_2 - 2\,A\,x_2^2)}{4\,(c\,\sigma^2 + x_2)^2}\,,
\end{align*}
we have that if $x_2(t) > 1 + \sqrt{1 + 8\,A\,c\,\sigma^2}/(4\,A)$, the $x_2'(t) > 0$. However, satisfying these inequalities for any $t$ would violate the boundary condition $x_2(T) = 0$, so we must have that $x_2$ is decreasing and $\beta$ is increasing.

A straightforward computation shows that
\begin{align*}
	\frac{d\lambda(t)}{dt} &= -\frac{A\,\sigma^2\,\Sigma^4(t)\,h^2(t)}{4\,(c\,\sigma^2 + \Sigma(t)\,h(t))^4}\,,
\end{align*}
and so if $A>0$ we have that $\lambda$ is decreasing. \qed

\subsection{Proof of Proposition \ref{prop:limits}}\label{proof:limits}

From Theorem \ref{thm:continuousTimeEquilibrium} and Proposition \ref{prop:compareStat}, the function $\beta$ is positive, increasing, and satisfies $\beta(T) = 1/2\,c$, therefore $\beta$ is uniformly bounded by $1/2\,c$. The uniform limit
\begin{align}
	\lim_{c\rightarrow\infty}\beta(t;c) = 0\,,
\end{align}
is then immediate. In equilibrium, the pricing rule is given by
\begin{align}
	\lambda(t;c) &= \frac{\beta(t;c)\,\Sigma(t;c)}{\sigma^2}\,,
\end{align}
and from \eqref{eqn:lambda_Sigma} we have that $\Sigma$ is a decreasing function of $t$ and therefore bounded by $\Sigma(0;c) = \Sigma_0^v$. Thus, we also have the limit
\begin{align}
	\lim_{c\rightarrow \infty} \lambda(t;c) &= 0\,,
\end{align}
uniformly in $t$.

For the remainder of the proof we only consider $A>0$, as the proof for $A=0$ is similar and more straightforward. The limits which correspond to $c\rightarrow 0$ rely on some of the results of the proof of Lemma \ref{lem:genExistAndUnique}. In particular, the solution to the FBODE \eqref{ode:FBODE} must satisfy
\begin{align*}
	\frac{dx_1}{dt}(t;c,k) &= f_1(x_1(t;c,k);c,k)\,, & x_1(0;c,k) &= \Sigma_0^v\,,\\
	f_1(x_1;c,k) &= -\frac{ 16\, A^2\, \sigma^2\, x_1^2\, \left(x_1^{\gamma(c)}  + k \right)^2 }{ \biggl((\gamma(c) +1)^2\, x_1^{\gamma(c)}  + (\gamma(c) -1)^2\, k\biggr)^2 }\,, 
\end{align*}
where
\begin{align*}
	\gamma(c) &= \sqrt{1 + 8\,A\,c\,\sigma^2}\,,
\end{align*}
ans where $k$ is chosen so that
\begin{align}
	(\gamma(c)+1)\,x_1^{\gamma(c)}(T;c,k) - (\gamma(c)-1)\,k = 0\,,\label{eqn:implicit}
\end{align}
and we have made all dependences on $c$ explicit. Note that all of these expressions above apply to both cases $c>0$ and $c=0$, with the provision that the choice of $k$ when $c=0$ is the positive root in \eqref{eqn:kpm}. Since the function $f_1$ is continuously differentiable with respect to $c$, so is the solution $x_1(t;c,k)$ (see \cite{hartman_ordinary_2002}). Additionally, $\gamma$ is continuously differentiable with respect to $c$, and so by the implicit function theorem we may take $k = k(c)$ in view of \eqref{eqn:implicit}. We then have $x_2$ given by
\begin{align*}
	x_2(t;c,k(c)) &= \rho(x_1(t;c,k(c))) = \frac{(\gamma(c)+1)\,x_1^{\gamma(c)}(t;c,k(c)) - (\gamma(c)-1)\,k(c)}{4\,A\,(x_1^{\gamma(c)}(t;c,k(c)) + k(c))}\,,
\end{align*}
which is also seen to be continuously differentiable with respect to $c$. Recall that in equilibrium the trading and pricing rules can be written as
\begin{align*}
	\beta(t;c) &= \frac{\sigma^2}{2\,(c\,\sigma^2 + x_2(t;c,k(c))}\,, & \lambda(t;c) &= \frac{x_1(t;c,k(c))}{2\,(c\,\sigma^2 + x_2(t;c,k(c))}\,.
\end{align*}
Both of these expressions are continuous functions of $t$ and $c$ except at $(t,c) = (T,0)$ where both denominators are equal to $0$. Thus, if we restrict $t$ to a compact subinterval on $[0,T)$ and $c$ to an interval of the form $[0,C]$, then both $\beta$ and $\lambda$ are uniformly continuous with respect to $(t,c)$, and the expressions given in \eqref{eq:betac0} and \eqref{eq:lambdac0} are obtained by direct substitution of $c=0$ using the corresponding solutions of $x_1$ and $x_2$ given in Lemma \ref{lem:genExistAndUnique}. \qed

\newpage

\section{Appendix B}

This appendix contains the full systems of equations that are solved in the proof of Proposition \ref{prop:single_auction_approx}.

\subsection{System for $\tilde{\lambda}_{i,j}$}

\begin{align*}
	\biggl(4\,\tilde{\lambda}_{0,0}^3 - 4\,\tilde{\lambda}_{0,0}\,\lambda_K^2\biggr) &= 0\,,\\
	\biggl(	4\,\tilde{\lambda}_{0,0}^4\,\sigma^2 - 4\,\tilde{\lambda}_{0,0}^2\,\lambda_K^2\,\sigma^2 + 12\,\tilde{\lambda}_{1,0}\,\tilde{\lambda}_{0,0}^2 - 4\,\tilde{\lambda}_{1,0}\,\lambda_K^2\biggr) &= 0\,,\\
	\biggl(	12\,\tilde{\lambda}_{0,0}^2\,\tilde{\lambda}_{0,1} - 4\,\tilde{\lambda}_{0,1}\,\lambda_K^2 + 8\,\tilde{\lambda}_{0,0}^2 - 8\,\lambda_K^2\biggr) &= 0\,,\\
	\biggl(\tilde{\lambda}_{0,0}^5\,\sigma^4 + 16\,\tilde{\lambda}_{0,0}^3\,\tilde{\lambda}_{1,0}\,\sigma^2 + 12\,\tilde{\lambda}_{2,0}\,\tilde{\lambda}_{0,0}^2 + 12\,\tilde\lambda_{0,0}\,\tilde\lambda_{1,0}^2 - 8\,\tilde\lambda_{0,0}\,\tilde\lambda_{1,0}\,\lambda_K^2\,\sigma^2 - 4\,\tilde\lambda_{2,0}\,\lambda_K^2\biggr) &= 0\,,\\
	\biggl(16\,\tilde\lambda_{0,0}\,\tilde\lambda_{1,0} + 12\,\tilde\lambda_{0,0}^2\,\tilde\lambda_{1,1} - 4\,\tilde\lambda_{1,1}\,\lambda_K^2 + 4\,\tilde\lambda_{0,0}^3\,\sigma^2 \hspace{30mm}\\
	+ 16\,\tilde\lambda_{0,0}^3\,\tilde\lambda_{0,1}\,\sigma^2 + 24\,\tilde\lambda_{0,0}\,\tilde\lambda_{0,1}\,\tilde\lambda_{1,0} - 8\,\tilde\lambda_{0,0}\,\tilde\lambda_{0,1}\,\lambda_K^2\,\sigma^2\biggr) &= 0\,,\\
	\biggl(12\,\tilde\lambda_{0,2}\,\tilde\lambda_{0,0}^2 + 12\,\tilde\lambda_{0,0}\,\tilde\lambda_{0,1}^2 + 16\,\tilde\lambda_{0,0}\,\tilde\lambda_{0,1} + 4\,\tilde\lambda_{0,0} - 4\,\tilde\lambda_{0,2}\,\lambda_K^2\biggr) &= 0\,,\\
	\biggl(5\,\tilde\lambda_{0,0}^4\,\tilde\lambda_{1,0}\,\sigma^4 + 16\,\tilde\lambda_{2,0}\,\tilde\lambda_{0,0}^3\,\sigma^2 + 24\,\tilde\lambda_{0,0}^2\,\tilde\lambda_{1,0}^2\,\sigma^2 + 12\,\tilde\lambda_{3,0}\,\tilde\lambda_{0,0}^2 + 24\,\tilde\lambda_{2,0}\,\tilde\lambda_{0,0}\,\tilde\lambda_{1,0} \hspace{20mm}\\
	- 8\,\tilde\lambda_{2,0}\,\tilde\lambda_{0,0}\,\lambda_K^2\,\sigma^2 + 4\,\tilde\lambda_{1,0}^3 - 4\,\tilde\lambda_{1,0}^2\,\lambda_K^2\,\sigma^2 - 4\,\tilde\lambda_{3,0}\,\lambda_K^2\biggr) &= 0\,,\\
	\biggl(16\,\tilde\lambda_{0,0}\,\tilde\lambda_{2,0} + 12\,\tilde\lambda_{0,1}\,\tilde\lambda_{1,0}^2 + 12\,\tilde\lambda_{0,0}^2\,\tilde\lambda_{2,1} - 4\,\tilde\lambda_{2,1}\,\lambda_K^2 + 8\,\tilde\lambda_{1,0}^2 + 5\,\tilde\lambda_{0,0}^4\,\tilde\lambda_{0,1}\,\sigma^4 \hspace{40mm}\\
	+ 12\,\tilde\lambda_{0,0}^2\,\tilde\lambda_{1,0}\,\sigma^2 + 16\,\tilde\lambda_{0,0}^3\,\tilde\lambda_{1,1}\,\sigma^2 + 24\,\tilde\lambda_{0,0}\,\tilde\lambda_{0,1}\,\tilde\lambda_{2,0} + 24\,\tilde\lambda_{0,0}\,\tilde\lambda_{1,0}\,\tilde\lambda_{1,1} \hspace{20mm}\\
	+ 48\,\tilde\lambda_{0,0}^2\,\tilde\lambda_{0,1}\,\tilde\lambda_{1,0}\,\sigma^2 - 8\,\tilde\lambda_{0,0}\,\tilde\lambda_{1,1}\,\lambda_K^2\,\sigma^2 - 8\,\tilde\lambda_{0,1}\,\tilde\lambda_{1,0}\,\lambda_K^2\,\sigma^2\biggr) &= 0\,,\\
	\biggl(4\,\tilde\lambda_{1,0} + 16\,\tilde\lambda_{0,0}\,\tilde\lambda_{1,1} + 16\,\tilde\lambda_{0,1}\,\tilde\lambda_{1,0} + 12\,\tilde\lambda_{0,1}^2\,\tilde\lambda_{1,0} + 12\,\tilde\lambda_{0,0}^2\,\tilde\lambda_{1,2} - 4\,\tilde\lambda_{1,2}\,\lambda_K^2 \hspace{40mm}\\
	+ 12\,\tilde\lambda_{0,0}^2\,\tilde\lambda_{0,1}\,\sigma^2 + 16\,\tilde\lambda_{0,0}^3\,\tilde\lambda_{0,2}\,\sigma^2 + 24\,\tilde\lambda_{0,0}\,\tilde\lambda_{0,1}\,\tilde\lambda_{1,1} + 24\,\tilde\lambda_{0,0}\,\tilde\lambda_{0,2}\,\tilde\lambda_{1,0} \hspace{10mm}\\
	+ 24\,\tilde\lambda_{0,0}^2\,\tilde\lambda_{0,1}^2\,\sigma^2 - 4\,\tilde\lambda_{0,1}^2\,\lambda_K^2\,\sigma^2 - 8\,\tilde\lambda_{0,0}\,\tilde\lambda_{0,2}\,\lambda_K^2\,\sigma^2\biggr) &= 0\,,\\
	\biggl(8\,\tilde\lambda_{0,3}\,\lambda_K^2 - 8\biggr) &= 0\,.
\end{align*}

\subsection{System for $\tilde{\beta}_{i,j}$}

\begin{align*}
	\biggl(16\,\tilde\beta_{0,0}^4\,\lambda_K^4 - 1\biggr) &= 0\,, \\
	\biggl(16\,\tilde\beta_{0,0}^3\,\lambda_K^4\,\sigma^2 + 64\,\tilde\beta_{1,0}\,\tilde\beta_{0,0}^3\,\lambda_K^4\biggr) &= 0\,, \\
	\biggl(32\,\tilde\beta_{0,0}^5\,\lambda_K^4 + 64\,\tilde\beta_{0,1}\,\tilde\beta_{0,0}^3\,\lambda_K^4 + 16\,\tilde\beta_{0,0}^3\,\lambda_K^2 + 2\,\tilde\beta_{0,0}\biggr) &= 0\,, \\
	\biggl(64\,\tilde\beta_{2,0}\,\tilde\beta_{0,0}^3\,\lambda_K^4 + 96\,\tilde\beta_{0,0}^2\,\tilde\beta_{1,0}^2\,\lambda_K^4 + 48\,\tilde\beta_{0,0}^2\,\tilde\beta_{1,0}\,\lambda_K^4\,\sigma^2\biggr) &= 0\,, \\
	\biggl(160\,\tilde\beta_{1,0}\,\tilde\beta_{0,0}^4\,\lambda_K^4 + 64\,\tilde\beta_{1,1}\,\tilde\beta_{0,0}^3\,\lambda_K^4 + 48\,\tilde\beta_{0,1}\,\tilde\beta_{0,0}^2\,\lambda_K^4\,\sigma^2 \hspace{20mm}\\
	+ 192\,\tilde\beta_{0,1}\,\tilde\beta_{1,0}\,\tilde\beta_{0,0}^2\,\lambda_K^4 + 48\,\tilde\beta_{1,0}\,\tilde\beta_{0,0}^2\,\lambda_K^2 + 2\,\tilde\beta_{1,0}\biggr) &= 0\,, \\
	\biggl(160\,\tilde\beta_{0,0}^4\,\tilde\beta_{0,1}\,\lambda_K^4 + 64\,\tilde\beta_{0,2}\,\tilde\beta_{0,0}^3\,\lambda_K^4 + 96\,\tilde\beta_{0,0}^2\,\tilde\beta_{0,1}^2\,\lambda_K^4 + 48\,\tilde\beta_{0,0}^2\,\tilde\beta_{0,1}\,\lambda_K^2 + 2\,\tilde\beta_{0,1}\biggr) &= 0\,.
\end{align*}

\section*{References}

\bibliographystyle{chicago}
\bibliography{bibfile}

\begin{thebibliography}{}

\bibitem[\protect\citeauthoryear{Almgren and Chriss}{Almgren and
  Chriss}{2001}]{almgren2001optimal}
Almgren, R. and N.~Chriss (2001).
\newblock Optimal execution of portfolio transactions.
\newblock {\em Journal of Risk\/}~{\em 3}, 5--40.

\bibitem[\protect\citeauthoryear{Back}{Back}{1992}]{back1992insider}
Back, K. (1992).
\newblock Insider trading in continuous time.
\newblock {\em The Review of Financial Studies\/}~{\em 5\/}(3), 387--409.

\bibitem[\protect\citeauthoryear{Baruch}{Baruch}{2002}]{baruch2002insider}
Baruch, S. (2002).
\newblock Insider trading and risk aversion.
\newblock {\em Journal of Financial Markets\/}~{\em 5\/}(4), 451--464.

\bibitem[\protect\citeauthoryear{Brillinger}{Brillinger}{1966}]{brillinger1966analyticity}
Brillinger, D.~R. (1966).
\newblock The analyticity of the roots of a polynomial as functions of the
  coefficients.
\newblock {\em Mathematics Magazine\/}~{\em 39\/}(3), 145--147.

\bibitem[\protect\citeauthoryear{Davis and Norman}{Davis and
  Norman}{1990}]{davis1990portfolio}
Davis, M.~H. and A.~R. Norman (1990).
\newblock Portfolio selection with transaction costs.
\newblock {\em Mathematics of operations research\/}~{\em 15\/}(4), 676--713.

\bibitem[\protect\citeauthoryear{Hartman}{Hartman}{2002}]{hartman_ordinary_2002}
Hartman, P. (2002).
\newblock {\em Ordinary {Differential} {Equations}}.
\newblock Classics in {Applied} {Mathematics}. Society for Industrial and
  Applied Mathematics.

\bibitem[\protect\citeauthoryear{Holden and Subrahmanyam}{Holden and
  Subrahmanyam}{1994}]{holden_subrahmanyam_1994}
Holden, C.~W. and A.~Subrahmanyam (1994).
\newblock {Risk aversion, imperfect competition, and long-lived information}.
\newblock {\em Economics Letters\/}.

\bibitem[\protect\citeauthoryear{Karatzas and Shreve}{Karatzas and
  Shreve}{2012}]{karatzas2012brownian}
Karatzas, I. and S.~Shreve (2012).
\newblock {\em Brownian motion and stochastic calculus}, Volume 113.
\newblock Springer Science \& Business Media.

\bibitem[\protect\citeauthoryear{Kyle}{Kyle}{1985}]{kyle1985continuous}
Kyle, A.~S. (1985).
\newblock Continuous auctions and insider trading.
\newblock {\em Econometrica: Journal of the Econometric Society\/}, 1315--1335.

\bibitem[\protect\citeauthoryear{Liptser and Shiryaev}{Liptser and
  Shiryaev}{2001}]{liptsershiryaev2001statistics}
Liptser, R. and A.~Shiryaev (2001).
\newblock {\em Statistics of Random Processes : II. Applications}.
\newblock Berlin, Heidelberg: Springer Berlin Heidelberg.

\bibitem[\protect\citeauthoryear{Magill and Constantinides}{Magill and
  Constantinides}{1976}]{magill1976portfolio}
Magill, M.~J. and G.~M. Constantinides (1976).
\newblock Portfolio selection with transactions costs.
\newblock {\em Journal of economic theory\/}~{\em 13\/}(2), 245--263.

\bibitem[\protect\citeauthoryear{Muhle-Karbe, Reppen, and Soner}{Muhle-Karbe
  et~al.}{2017}]{muhle2017primer}
Muhle-Karbe, J., M.~Reppen, and H.~M. Soner (2017).
\newblock A primer on portfolio choice with small transaction costs.
\newblock {\em Annual Review of Financial Economics\/}~{\em 9}, 301--331.

\bibitem[\protect\citeauthoryear{Pham}{Pham}{2009}]{pham2009continuous}
Pham, H. (2009).
\newblock {\em Continuous-time stochastic control and optimization with
  financial applications}, Volume~61.
\newblock Springer Science \& Business Media.

\bibitem[\protect\citeauthoryear{Subrahmanyam}{Subrahmanyam}{1998}]{subrahmanyam1998transaction}
Subrahmanyam, A. (1998).
\newblock Transaction taxes and financial market equilibrium.
\newblock {\em The Journal of Business\/}~{\em 71\/}(1), 81--118.

\end{thebibliography}

\end{document}